\documentclass[aps,showpacs, preprint]{revtex4}
%\swapnumbers
%\usepackage{hyperref}
%\documentclass[aip, preprint]{revtex4-1}  %envcountsame
%\documentclass[cmp,referee]{svjour}

\usepackage{amsfonts,amssymb,bbold}
\usepackage{amsmath,amsthm,mathtools}
\usepackage[pdftex,letterpaper=true,pagebackref=false]{hyperref}
\usepackage{verbatim} %to use the \begin{comment} \end{comment}
%\usepackage{setspace}
%\onehalfspacing

%\usepackage{CJK}
%\usepackage{CJKnumb}

\hyphenpenalty=950

%\newenvironment{bew}[2]{\removelastskip\vspace{6pt}\noindent
% {\it Proof  #1.}~\rm#2}{\par\vspace{6pt}}
%\newlength{\Taille}

\newtheorem{theorem}{Theorem}[section]
\newtheorem{lemma}[theorem]{Lemma}

\theoremstyle{definition}

\theoremstyle{remark}
\newtheorem{remark}[theorem]{Remark}

%{\begin{pmatrix} #1 \\  #2 \\ \end{pmatrix}   }
\DeclareMathOperator{\tr}{Tr}

\def\>{\rangle}
\def\rrangle{\rangle\!\rangle}
\def\<{\langle}
\def\llangle{\langle\!\langle}

\def\eps{\epsilon}

\def\bbN{\mathbb{N}}

\def\bbR{\mathbb{R}}

\def\bbZ{\mathbb{Z}}
\def\bbone{\mathbb{1}}

\def\cA{\mathcal{A}}

\def\cH{\mathcal{H}}
\def\cI{\mathcal{I}}

\def\cK{\mathcal{K}}

\def\cM{\mathcal{M}}
\def\cN{\mathcal{N}}

\def\cR{\mathcal{R}}

\def\cU{\mathcal{U}}
\def\cV{\mathcal{V}}
\def\cX{\mathcal{X}}
\def\cY{\mathcal{Y}}

\def\ru{{\rm u}}
\def\rv{{\rm v}}
\def\rx{{\rm x}}
\def\ry{{\rm y}}

\def\fB{\mathfrak{B}}
\def\fD{\mathfrak{D}}

\def\fK{\mathfrak{K}}

\def\tH{ {\bf H} }
\def\tK{ {\bf K} }

\def\tU{ {\bf U} }

% Greek shortcuts
%\def\a{\alpha}
%\def\b{\beta}

\def\eps{\epsilon}

\def\w{\omega}

\def\half{\frac{1}{2}}

\def\tens#1{{\bf {#1}}}

\begin{document}

\title{Truncated quantum channel representations for coupled harmonic oscillators}

\author{Yingkai Ouyang }
\affiliation{\footnotesize Department of Combinatorics and Optimization, Institute of Quantum Computing, University of Waterloo, \\ 200 University Avenue West,
\footnotesize Waterloo, Ontario N2L 3G1, Canada.\\ \footnotesize \texttt{y3ouyang@math.uwaterloo.ca}}
\author{Wee Hao Ng}
\affiliation{\footnotesize Department of Physics,
Cornell University,
109 Clark Hall,
Ithaca, New York 14853-2501, USA \\ \footnotesize \texttt{wn68@.cornell.edu}}

\begin{abstract}
Coupled quantum harmonic oscillators, studied by many authors using many different techniques over the decades, are frequently used toy-models to study open quantum systems. In this manuscript, we explicitly study the simplest oscillator model -- a pair of initially decoupled quantum harmonic oscillators interacting with a spring-like coupling, where the bath oscillator is initially in a thermal-like state. In particular, we treat the completely positive and trace preserving map on the system as a quantum channel, and study the truncation of the channel by truncating its Kraus set and its output dimension. 
We thereby derive the 
 truncated transition amplitudes of the corresponding truncated channel. Finally, we give a computable approximation for these truncated transition amplitudes with explicit error bounds, and perform a case study of the oscillators in the off-resonant and weakly-coupled regime numerically.
 We demonstrate explicitly that the substantial leakage error can be mitigated via quantum error correction.
\end{abstract}
\pacs{03.67a, 03.65Yz}

\maketitle
%\end{CJK*}

%
% paper title
% can use linebreaks \\ within to get better formatting as desired

%\title{Perturbative Approximations of a Quantum Channel from a Coupled Oscillator Model}
%\title{Approximate Low Temperature Dynamics of a Coupled Oscillator}
%\title{Truncated quantum channel representations for coupled harmonic oscillators}

%\subtitle{Do you have a subtitle?\\ Time-averaged open system dynamics with error bounds}
%\author{Yingkai Ouyang\inst{1} \and Wee Hao Ng\inst{2}% etc
% \thanks is optional - remove next line if not needed
%\thanks{\emph{Present address:} Insert the address here if needed}%
%}                     % Do not remove
%
%\institute{Department of Combinatorics and Optimization, Institute of Quantum Computing, \\ University of Waterloo, 200 University Avenue West, Waterloo, Ontario N2L 3G1, Canada. \and
%Department of Physics,
%Cornell University,
%109 Clark Hall,
%Ithaca, New York 14853-2501, USA}
%
%\date{Received: date / Accepted: date}
% The correct dates will be entered by Springer
%
% Add name of the expert who has communicated your paper
%\communicated{name}
%
%\maketitle

\section{Introduction}
%%%%%%%%%%%%%%%%%%%%%%
%Understanding the exact dynamics of open quantum systems -- quantum systems that interact with an environment -- is important but often highly non-trivial especially when unbounded Hamiltonians are involved. Hence many methods have been used to approximate and simplify the analysis of open quantum systems. They  In this paper, we work with a simple physical model and give an approximation for its dynamical behavior with error bounds.

One of the canonical physical models in quantum physics is that of quantum oscillators coupled with harmonic baths.
The dynamics of such models and their variations has been extensively studied, using various techniques \cite{FKM65,RLL67,EKN68,BBW73,Dav73,MiH86,NRSS09,YUKG88,HPZ92,CYH08,MaG12}. These techniques include Markovian master equations \cite{Dav74}, quantum stochastic processes and quantum Langevin equations \cite{Dav69,Dav70,Dav71,FKM65,RMa81,FLO88}, Kossakowski-Lindblad equations \cite{Kos72,Lin76}, methods in density-functional theory \cite{parr1994density}, the standard techniques of perturbation theory, among many others \cite{Kos83,LCDFGZ87}.

Quantum channels \cite{nielsen-chuang} can be used to quantify the dynamics of a quantum system,
and can be described by truncated transition amplitudes.
In this paper, we approximate the truncated transition amplitudes of a given channel, where the truncation is performed with respect to the quantum channel's set of Kraus operators and its dimensions. 
A truncated quantum channel is a trace-decreasing quantum operation, and quantifies the partial dynamics acting on the system. Knowledge of the truncated quantum channel has utility -- lower bounds on the performance of quantum error correction codes with its knowledge \cite{LNCY97,BaK02,Fletcher08,Kosut08,BaG09,Tys10,BeO10,BeO11,Ouyang-PKL}.

In this paper, we work towards quantifying the approximate dynamics of a pair of initially decoupled quantum harmonic oscillators interacting with a spring-like coupling, where the bath oscillator is initially in a thermal-like state. We work with a truncated subset of the model's Kraus operators, and thereby approximate its truncated transition amplitudes. We note that the Kraus operators of oscillator-bath models have also been approximated by various authors 
\cite{MiH86,CLY97,Liu04}. Recently, Holevo also gave a formal exact expression for the Choi-Jamiolkowski operator for Gaussian channels \cite{Hol11}, which describes the dynamics of coupled oscillators. Our contributions in this paper, are the explicit upper bounds on the approximation error of the truncated transition amplitudes of two quantum harmonic oscillators coupled via a spring-like interaction, where the approximation is an explicit summation of a finite number of computable terms,
and dependent on the size of the input and output dimensions of the truncated channel. Our results can be used to explicitly study this toy model with rigorous error bounds. 
In particular, we numerically demonstrate and provide lower bounds for the leakage error, and show how this leakage error is mitigated via quantum error correction.

The organization of the paper is as follows. In Section \ref{sec:preliminaries}, we introduce the preliminary material needed for this paper. In particular, we review $L^2(\bbR)$ Hilbert space, quantum states, quantum channels, Hermite functions, and the linear canonical transformations for the quantum harmonic oscillator. In Section \ref{sec:model-description}, we give a treatment of the truncated dynamics of two quantum harmonic oscillators interacting with a spring-like coupling, and give explicit bounds on the error term induced by approximating the truncated transition amplitudes with a finite sum in Theorem \ref{thm:qho-error-bound}. In Section \ref{sec:hermite-function-bounds} we give bounds on Hermite functions that are needed for the proof of Theorem \ref{thm:qho-error-bound}. Finally we apply our results explicitly in Section \ref{sec:example} in the case where the oscillators are off-resonant and weakly coupled. 

\section{Preliminaries} \label{sec:preliminaries}
In this section, we review the theory of $L^2(\bbR)$ Hilbert spaces, quantum states and various representations of quantum channels, Hermite polynomials and functions, and coupled harmonics oscillators.
\subsection{The $L^2(\bbR)$ Hilbert spaces}
We refer the reader to \cite{reed+simon-I} and \cite{Bogo+L+T} for an introduction to separable Hilbert spaces.
 In this paper, all Hilbert spaces are complex and separable, have countable bases, and are typically isomorphic to the set of square integrable functions $L^2(\bbR)$ which necessarily have infinite dimensions. Let $\cH^*$ be the dual space of Hilbert space $\cH$. For Hilbert spaces $\mathcal H$ and $\cK$, let $L(\mathcal H, \cK)$ denote the set of linear operators mapping $\mathcal H$ to $\mathcal K$, and let $L(\cH) := L(\cH,\cH)$. Let $\fB(\cH, \cK)$ denote the set of bounded operators in $L(\cH, \cK)$. 

We use the Dirac's `ket' $|\psi_\cH\>$ to denote a function $\psi$ in the Hilbert space $\cH$. For Hilbert space $\cH$, 
let $\{|j_\cH\> \}_{j \in \bbN}$ denote its generic orthonormal basis. We also use $|\psi_\cH, \varphi_\cK\>$ and $|\psi_\cH\>| \varphi_\cK\>$ to denote $|\psi_\cH\> \otimes |\varphi_\cK\>$. We denote the inner product of Hilbert space $\cH$ using the Dirac notation as $\< \cdot | \cdot \>_\cH $ which is a sesquilinear form. 
For functions $\psi, \varphi \in \cH = L^2(\bbR)$, their inner product in the position basis is $\<\psi|\varphi\>_\cH := \int_\bbR \psi(x)^* \varphi(x) dx $. 
The physicist's position and momentum operators $ \hat x_\cH$ and $\hat p_\cH$ are sesquilinear forms that map the tuple $(|\psi_\cH\>, |\varphi_\cH\>)$ to 
$\int_\bbR \psi(x)^* x \varphi(x) dx
=\<\psi_\cH | \hat x_\cH| \varphi_\cH\>
$ and 
  $\int_\bbR \psi(x)^*  \frac{\hbar \partial}{i\partial x}\varphi(x) dx
  =
  \<\psi_\cH | \hat p_\cH| \varphi_\cH\>$
   respectively. We also denote the identity map on $\cH$ as $\bbone_\cH$.
%%%%%%%%%%%%%%%%%%%%%%%%%%%%%%%%%%%%%%%%%%%
\subsection{Quantum states and channels}
%%%%%%%%%%%%%%%%%%%%%%%%%%%%%%%%%%%%%%%%%%
We refer the reader to \cite{nielsen-chuang} for an introduction to quantum states and channels. Define the set of quantum states on Hilbert space $\mathcal H$ to be $\mathfrak D(\mathcal H)$, the set of all positive semi-definite and trace one operators in $\fB (\cH)$. When $\rho \in \fD(\cH \otimes \cK)$, we denote the partial trace of $\rho$ on Hilbert space $\cH$ as
$\tr_\cH(\rho ):= \<j_\cH| \rho |j_\cH\>$.

A quantum channel $\Phi : \fB(\mathcal H) \to \fB(\mathcal K)$ is a completely positive and trace-preserving (CPT) linear map, and its non-unique Kraus representation is \cite{HeK69,HeK70,kraus}
\[\Phi(\rho) = \sum_{\tens{K} \in \fK} \tens{K} \rho \tens{K}^\dagger, \quad
\sum_{\tens{K} \in \fK} \tens{K}^\dagger \tens{K} = \mathbb 1_{\cH} \]
where $\fK \subset \fB(\cK, \cH)$ is called the Kraus set of $\Phi$. We denote the basis-dependent matrix elements of the Kraus operators by $\tK_{j,j'}$ so that for all $\tK  \in \fK$,
\begin{align*}
\tK = \sum_{j,j'} \tK_{j,j'} |j'_\cK\>\<j_\cH|.
\end{align*}
 We define the {\bf transition amplitudes} of $\Phi$ with respect to the Kraus set $\fK$ to be
\begin{align}
T^{(a,b) \to (a',b')}_{\fK} := \sum_{\tK \in \fK}   \tK_{b,b'}\tK_{a,a'}^*,  \label{eq:transition-amplitude-def}
\end{align}
a sum of the product of two Kraus operators over the entire Kraus set.
Now let $\rho \in \fD(\cH)$ and $\Phi(\rho) \in \fD(\cK)$ have the decompositions
\begin{align*}
\rho = \sum_{a,b} \rho_{a,b} |b_\cH\> \<a_\cH| , \quad \Phi(\rho) = \sum_{a,b} \rho'_{a',b'} |b'_\cK\> \<a'_\cK|
 \end{align*}
so that in the Kraus representation,
\begin{align}
&\< b'_\cK  | \Phi(\rho) | a'_\cK \>
=  \< b'_\cK  |
            \sum_{\tK \in \fK } \tK \rho \tK^\dagger
        | a'_\cK \> \notag\\
    =&  \< b'_\cK  |
            \sum_{\tK \in \fK } \sum_{ j,j' } \tK_{j,j'} |j'_\cK\>\<j_\cH|
            \sum_{a,b}  \rho_{a,b}  |b_\cH\>\<a_\cH|
            \sum_{ k, k' } \tK_{k, k'}^* | k_\cH\>\< k'_\cK| a'_\cK \> \notag\\
    =&  \sum_{\tK \in \fK } \sum_{ j,j' }  \sum_{a,b} \sum_{ k, k' }
            \tK_{j,j'} \tK_{k, k'}^* \rho_{a,b}
            \< b' | j'\>_\cK \<j |b\>_\cH \<a | k\>_\cH \< k'| a'\>_\cK \notag\\
    =&   \sum_{a,b\in \bbN}  \left( \sum_{\tK \in \fK }
            \tK_{b,b'} \tK_{a, a'}^*\right) \rho_{a,b} =
            \sum_{a,b\in \bbN}
            T^{(a,b) \to (a',b')}_{\Phi, \fK}	\rho_{a,b}.
    \label{eq:kraus-finalstate}
\end{align}
Hence $T^{(a,b) \to (a',b')}_{\fK}$ quantifies the transition amplitudes of $\<b_\cH|\rho|a_\cH\>$ to $\<b'_\cK|\Phi(\rho)|a'_\cK\>$.
For the purpose of quantum information processing, it may not be necessary to work with the full Kraus set $\fK$. In this paper, we instead restrict our attention to the {\it truncated Kraus set} $\Omega$, which is some appropriately chosen subset of the full Kraus set. This truncation procedure approximates the channel well if the truncated Kraus set comprises of the `typical' Kraus effects. For the purpose of quantum error correction, partial knowledge of the channel is already of great utility, and recovery channels can be constructed based on this partial information to give lower bounds on the entanglement fidelity of specifically chosen quantum codes. Hence in this paper, the {\it truncated transition amplitudes} $T^{(a,b) \to (a',b')}_{\Omega}$ play a central role in quantifying the truncated dynamics of the channel $\Phi$.
%%%%%%%%%%%%%%%%%%%%%%%
\subsection{Hermite polynomials and functions}
%%%%%%%%%%%%%%%%%%%%%%%
For $n \in \bbN$, define the Hermite polynomials $H_n(x)$ and the Hermite functions $\psi_n(x)$ to be
\begin{align}
H_n(x) &:= (-1)^n e^{x^2} \frac{d^n }{dx^n} e^{-x^2} , \quad
\psi_n(x) := \frac{e^{-\frac{1}{2}x^2} H_n(x) }{\sqrt{2^n  n! \sqrt{\pi} }}.
\end{align}
For example, $H_0(x) = 1$ and $H_1(x) = 2x$.
The reader can refer to \cite{abramowitz+stegun} for the properties of the Hermite polynomials and functions. For $c> 0 $, also define the rescaled Hermite function to be
\begin{align}
\psi_{n,c}(x) = \<\tilde x_\cH | \psi_{n,c, \cH}\> := \sqrt{c} \psi_n(c x) .
\end{align}
As stated by Watson \cite{Wat33}, Mehler's formula applies in the case when $|z| < 1$ and $z$ is real, that is
\begin{align}
\sum_{n = 0}^\infty z^n \psi_n(x) \psi_n(y) = \frac{1}{\sqrt{\pi(1-z^2)} }
            \exp \left[
                \frac{4xy z - (x^2 + y^2)(1+z^2) }{2(1-z^2)}
            \right] . \label{eq:mehler2}
\end{align}
Mehler's formula also holds for all complex numbers $|z| < 1$, with the series converging uniformly and absolutely [Theorem 23.1 \cite{wong}].
%%%%%%%%%%%%%%%%%%%%%%%
%%%%%%%%%%%%%%%%%%%%%%%
\subsection{A pair of harmonic oscillators and their linear canonical transformations}
%%%%%%%%%%%%%%%%%%%%%%%
%Consider a pair of quantum harmonic oscillators, quadratically coupled in their coordinates. One oscillator is regarded as the system, and the other as the bath.
\subsubsection{The classical model}
We refer the reader to \cite{shankar} for an introduction to the quantum harmonic oscillator.
Define the classical Hamiltonian of a classical harmonic oscillator with mass $m$, resonant frequency $\omega$, position coordinate $x$ and momentum coordinate $p$ to be
\begin{align}
H_{ m,\omega ; x,p} := \frac{p^2}{2m} + \half m \omega^2 x^2.
\end{align}
The model we study has the classical Hamiltonian
\[
H = H_{ m_\rx, \omega_{\rx, \bf o} ; x, p_\rx  } +
    H_{ m_\ry, \omega_{\ry, \bf o} ; y, p_\ry  } +
    H_{{\rm int}, {\bf o}}.
\]
where $H_{{\rm int}, {\bf o}} := \frac{1}{2}k (x-y)^2 $ is the classical Hamiltonian representing the spring-like interaction between the oscillators where $k \ge 0$.
The spring-like interaction $H_{{\rm int}, {\bf o}}$ introduces quadratic terms 
$\frac{k x^2}{2}$ and 
$\frac{k y^2}{2}$ into $H$, effectively renormalizing the oscillator frequencies from 
$\omega_{\rx, {\bf o}}$ and 
$\omega_{\ry, {\bf o}}$ to 
$\omega_\rx := \sqrt{\omega_{\rx, {\bf o}}^2 
+ \frac{k}{m_\rx}}$ and 
$\omega_\ry := \sqrt{\omega_{\ry, {\bf o}}^2 
+ \frac{k}{m_\ry}}$ respectively. 
Hence when 
$ H_{\rm int} := - k  x y$,
\begin{align}
H =
    H_{ m_\rx, \omega_{\rx} ; x, p_\rx  } +
    H_{ m_\ry, \omega_{\ry} ; y, p_\ry  } +
    H_{\rm int} \label{eq:renorm} .
\end{align}
In an experimental setup, it may be impossible to turn off the interaction between the two oscillators. Then the physically measured oscillator frequencies correspond to the renormalized frequencies. Therefore, we work with the renormalized representation of the model Hamiltonian given by (\ref{eq:renorm}).
%%%%%%%%%%%%%%
\subsubsection{The quantized model} \label{subsubsec:quantized-model}
Define the Hamiltonian of a quantum harmonic oscillator with associated Hilbert space $\cH$, mass $M>0$, resonant frequency $\omega>0$, position operator $\hat x_\cH$ and momentum operator $\hat p_\cH$ to be
\begin{align}
\tH_{ M,\omega ; \hat x_\cH, \hat p_\cH } := \frac{\hat p_\cH^2}{2M} + \half M \omega^2 \hat x_\cH^2.
\end{align}
The set of rescaled Hermite functions $\{| \psi_{n, \sqrt{\frac{M\omega}{\hbar}}, \cH } \> \}_{n \in \bbN}$ is the set of energy eigenfunctions of the Hamiltonian $\tH_{(M,\omega ; \hat x_\cH, \hat p_\cH)}$.
Let the Hilbert space of the first and second oscillators be $\mathcal X$ and $\mathcal Y$ respectively, both isomorphic to $L^2(\bbR)$. Define $\hat x := \hat x_\cX \otimes \bbone_\cY$, $\hat y := \bbone_\cX \otimes \hat x_\cY$, $\hat p_x := \hat p_\cX \otimes \bbone_\cY$ and $\hat p_y := \bbone_\cX \otimes \hat p_\cY$.
Then the quantized model Hamiltonian (\ref{eq:renorm}) is
\begin{align}
\tH &=
    \tH_{ m_\rx, \omega_{\rx} ; \hat x_\cX, \hat p_\cX  } \otimes \bbone_\cY +
    \bbone_\cX \otimes \tH_{ m_\ry, \omega_{\ry} ; \hat x_\cY, \hat p_\cY  } +
    \tH_{\rm int} \notag\\
    &=
    \tH_{ m_\rx, \omega_{\rx} ; \hat x, \hat p_x  }  +
    \tH_{ m_\ry, \omega_{\ry} ; \hat y, \hat p_y  } +
    \tH_{\rm int}
      \label{eq:quantum-renorm}
\end{align}
where $\tH_{\rm int} := - k \hat x \hat y$ is the quantized interaction.

The coupled quantum harmonic oscillators can be decoupled by a linear canonical transformation of the oscillator positions and momenta \cite{JMM11}.
 Define the rotation matrix, the rotation angle, and the rescaled mass by
\begin{align}
{\bf R} := \footnotesize \begin{pmatrix} \cos \theta & \sin \theta \\ -\sin \theta & \cos \theta  \end{pmatrix} , \quad
 \quad \theta := \frac{1}{2}\tan^{-1} \left( \frac{2k/m}{\omega_\ry^2 - \omega_\rx^2} \right) ,
 \quad m := \sqrt{m_\rx m_\ry}
 \end{align}
 respectively. The use of straightforward trigonometry then gives 
\begin{align}
\cos \theta =
 \frac{1}{\sqrt{2}} \left( 1 +
 \frac{1}
         {\sqrt{      1+     \frac{4k^2/m^2}
                          {(\omega_\ry^2 - \omega_\rx^2)^2}
                   }
          }
\right)^{1/2}, \quad
\sin \theta = 
 \frac{1}{\sqrt{2}} \left( 1 -
 \frac{1}
         {\sqrt{      1 +     \frac{4k^2/m^2}
                          {(\omega_\ry^2 - \omega_\rx^2)^2}
                   }
          }
\right)^{1/2}. 
\label{eq:cos-theta-sin-theta}
\end{align}
where 
\[\frac{4k^2/m^2}{(\omega_\ry^2 - \omega_\rx^2)}
 = 4  \left(
 \frac{ \sqrt{m_\rx  m_\ry} }{k} 
 (\omega_{\ry, {\bf o}}^2 - 
   \omega_{\rx, {\bf o}}^2  ) 
   + 
      \frac{m_\rx - m_\ry}{\sqrt{m_\rx m_\ry} }
   \right)^{-2}.
 \]
Note that the rotation angle quantifies the strength of the coupling, in the sense that $\cos \theta \approx 1$ and $\sin \theta \approx 0$  when the coupling constant $k$ is small and the oscillators are off-resonant.
Define the normalization parameter $\mu := \sqrt[4]{\frac{m_\rx}{m_\ry}}$. Then we choose the transformed position and momenta operators to be given by
\begin{align}
( \hat{u} , \hat{v})          &:= {\bf R} (\mu^{-1}  \hat{x} , \mu \hat{y} ),       \notag\\
(\hat{p}_u , \hat{p}_v )      &:= {\bf R} (\mu \hat{p}_x ,\mu^{-1} \hat{p}_y )
\end{align}
where $(x_1,x_2,...)$ denotes a column vector.
% We correspondingly define $u(x,y):= \frac{\cos \theta}{\mu}x + \mu y \sin \theta$ and $v(x,y):= \frac{-\sin \theta}{\mu}x + \mu y \cos \theta$.
The quantized Hamiltonian is
$\tH =  \tH_{ m,\w_u; \hat u , \hat p_u} + \tH_{m,\w_v; \hat v , \hat p_v}$
where
\begin{align}
 \omega_\ru = \sqrt{ \begin{pmatrix} \w^2_\rx \\ \w^2_\ry \end{pmatrix} \cdot \begin{pmatrix} \cos^2 \theta \\ \sin^2 \theta \end{pmatrix}  - \frac{k}{m} \sin(2 \theta) },
 \quad
 \omega_\rv = \sqrt{ \begin{pmatrix} \w^2_\rx \\ \w^2_\ry \end{pmatrix} \cdot \begin{pmatrix} \sin^2 \theta \\ \cos^2 \theta \end{pmatrix}  + \frac{k}{m} \sin(2 \theta) }.
\end{align}
Note that the frequencies $\w_\ru$ and $\w_\rv$ are real as
 long as the original oscillator frequencies $\w_{\rx,{\bf o}}$ and $\w_{\ry,{\bf o}}$ before
 renormalization are also real, because the renormalized frequencies $\omega_\rx$ and $\omega_\ry$ increase as the coupling strength $k$ increases.

The linear canonical transformation that decouples the pair of harmonic oscillators is not unique. We chose the transformation that gives the same mass $m$ for the decoupled oscillators, so that the only parameter different between them are the frequencies $\omega_u$ and $\omega_v$. Since the transformation we have performed is canonical, $\lbrack \hat{u},\hat{v} \rbrack = \lbrack \hat{p}_u,\hat{p}_v \rbrack = \lbrack \hat{u},\hat{p}_v \rbrack = \lbrack \hat{v},\hat{p}_u \rbrack = 0$, $\lbrack \hat{u},\hat{p}_u \rbrack = i\hbar$ and $\lbrack \hat{v},\hat{p}_v \rbrack = i\hbar$. Hence there exist Hilbert spaces $\cU, \cV$ isomorphic to $L^2(\bbR) $ such that $\cX \otimes \cY = \cU \otimes \cV$, $\hat u = \hat x_\cU \otimes \bbone_\cV$, $\hat v = \bbone_\cU \otimes \hat x_\cV $, $\hat p_u = \hat p_\cU \otimes  \bbone_\cV  $ and $\hat p_v = \bbone_\cU \otimes \hat p_\cV $.

Let $c_\ru := \sqrt{\frac{m \omega_\ru}{\hbar}}$ and $c_v := \sqrt{\frac{m \omega_v}{\hbar}}$, $c_\rx := \sqrt{\frac{m_\rx \omega_\rx}{\hbar}}$ and $c_\ry := \sqrt{\frac{m_\ry \omega_\ry}{\hbar}}$.  Then the set of eigenstates of the uncoupled Hamiltonian
$    \tH_{ m_\rx, \omega_{\rx} ; \hat x, \hat p_\rx  } + \tH_{ m_\ry, \omega_\ry ; \hat \ry, \hat p_\ry  } $
and the full Hamiltonian $\tH$ are
$\{ | \psi_{\kappa, c_\ru, \cU} , \psi_{\chi, c_\rv, \cV} \>  \} _{\kappa ,\chi \in \bbN}  $ and
$\{ | \psi_{j, c_\rx, \cX} , \psi_{\ell, c_\ry, \cY} \>  \} _{j,\ell \in \bbN}  $
 respectively.
%%%%%%%%%%%%%%%%%%%%%%%%%%%%%%%%%%%%%%%%%%%%%%%%
%%%%%%%%%%%%%%%%%%%%%%%%%%%%%%%%%%%%%%%%%%%%%%%%
\section{Truncated dynamics of the interacting system} \label{sec:model-description}
This section highlights the main results of our paper. We provide a computable approximation to our physical model's truncated channel with corresponding error bounds that are simple to describe.

\subsection{The general model}
The Hilbert space of our model has the general form $\cH = \cX \otimes \cY$ where $\cX$ and $\cY$ are separable Hilbert spaces of the system and the environment respectively. Our model's Hamiltonian is
 \[
 \tH := \tH_\rx \otimes \bbone_\cY + \bbone_\cX \otimes \tH_\ry + \tH_{{\rm int}}
 \]
 where $\tH, \tH_\rx, \tH_\ry$ and $ \tH_{{\rm int}}$ are (typically unbounded) Hermitian operators in the sets $L(\cH), L(\cX), L(\cY)$ and $L(\cH)$ respectively. The Hamiltonians $\tH_\rx$ and $\tH_\ry$ describes the bare dynamics on $\cX$ and $\cY$ respectively, and $\tH_{{\rm int}}$ describes the system-bath interaction.

Let the initial state of the entire model be $\rho_{{\rm all}} := \rho_0 \otimes \sigma_\ry$, where $\rho_0 \in \fD(\cX)$ and the bath state is
$  \sigma_\ry = \sum_{\ell \in \bbN} p_\ell |\ell_\cY\>\<\ell_\cY| \in \fD(\cY). $
 Let the time evolution operator of the entire model at time $t$ be the unitary operator $\tens U_t \in \fB(\cH)$. Then the time evolved state of system $\cX$ at time $t$ is
\begin{align}
\rho_t &   := \Phi_t(\rho_0 ) = \tr_{\cY} \tU_t \rho_{{\rm all}} \tU_t^\dagger \notag\\
        &   = \sum_{j,j',\ell, \ell' \in \bbN}
                    \<\ell'_\cY| \tU_t |j_\cY\>\<j_\cY| \rho_0 \otimes p_\ell |\ell_\cY\>\<\ell_\cY| j'_\cY \>\<j'_\cY| \tU_t^\dagger |\ell'_\cY\>
                    \notag\\
    &= \sum_{\ell,\ell' \in \bbN}\<\ell'_\cY | \tU_t |\ell_\cY\> p_\ell \rho_0 \<\ell_\cY| \tU_t^\dagger |\ell'_\cY\>.
\end{align}
Using (\ref{eq:transition-amplitude-def}), a feasible Kraus set and transition amplitudes for $\Phi_t$ are
\begin{align}
\fK_t &:= \left \{ \sqrt{p_\ell}\<\ell'_\cY| \tU_t |\ell_\cY\> : \ell,\ell' \in \bbN \right\} \label{eq:kraust}\\
T_{\fK_t}^{(a,b) \to (a',b')} &=
	\sum_{\ell,\ell' \in \bbN} p_\ell
		\< b'_\cX, \ell'_\cY  | \tU_t | b_\cX, \ell_\cY\>
		\< a_\cX, \ell_\cY  | \tU_t^\dagger | a'_\cX, \ell'_\cY\> \label{eq:transition-amplitudet}.
\end{align}
The transition amplitudes (\ref{eq:transition-amplitudet}) of the full quantum channel may be impossible to evaluate because of the inifinite summation that is required. In view of this, we can instead evaluate the truncated transition amplitudes by truncating the infinite summation. These truncated transition amplitudes are the transition amplitudes of the truncated quantum channel.
%%%%%%%%%%%%%%%%%%%%%%%%%%%%%%%%
\subsection{Coupled harmonic oscillators} \label{subsec:coupled-HOs}
The approximate dynamics of coupled harmonic oscillators are still actively studied \cite{ChC07, ChC09}. In our paper, we use the model as described in Section \ref{subsubsec:quantized-model}. Let $z_u := e^{-i\omega_u t}$ and $z_v := e^{-i\omega_v t}$.
For $j \in \bbN$, define $|j_\rx\>:= |\psi_{j, c_\rx,\cX}\>, |j_\ry\>:= |\psi_{j, c_\ry,\cY}\>, |j_\ru\>:= |\psi_{j, c_\ru,\cU}\>$, and $|j_\rv\>:= |\psi_{j, c_\rv,\cV}\>$.
 Then the unitary operator $\tU_t$ has the spectral decomposition
\begin{align}
\tU_t = \sum_{\kappa , \chi \in \mathbb N} \sqrt{z_u z_v} z_u^\kappa z_v ^\chi | \kappa_\ru, \chi_\rv \>\< \kappa_\ru, \chi_\rv |. \label{eq:propagator}
\end{align}
Let $r = \exp(-\frac{\hbar \w_y}{ k_B T}) \in [0,1)$, where $k_B$ is the Boltzmann constant and $0 \le T< \infty$ is the effective temperature of the bath.  The state of the bath with a Boltzmannian distribution is
\begin{align}
\sigma_{\ry} = \sum_{\ell \in \bbN} r^\ell (1-r) | \ell_\ry \>\<\ell_\ry |. \label{eq:bath-state}
\end{align}
Thus $p_\ell = r^\ell (1-r)$ in equations (\ref{eq:kraust}) and (\ref{eq:transition-amplitudet}).
%%%%%%%%%%%%%%%%%%%%%%%%%%%%
\subsubsection{Kraus operators and transition amplitudes} \label{subsubsec:kraus-and-choi}
%%%%%%%%%%%%%%%%%%%%%%%%%%%%
Using (\ref{eq:kraust}), the matrix elements of our Kraus operator $\tK \in \fK_t$ indexed by $\ell, \ell' \in \bbN$ are
\begin{align}
&\< j'_\rx  | \tK | j_\rx \>  =
\sqrt{p_\ell}  \<  j'_\rx  ,  \ell'_\ry  | \tU_t | j'_\rx  , \ell_\ry \>
\notag \\
 =&
\sqrt{p_\ell}  \< j'_\rx  ,  \ell'_\ry |\sum_{\kappa, \chi \in \mathbb N} z_\ru^{\kappa + \half}  z_\rv ^{\chi+\half}
 |\kappa_\ru, \chi_\rv \> \< \kappa_\ru, \chi_\rv | j_\rx, \ell_\ry \>
\label{eq:kraus-form}
\end{align}
The goal is now to find an expression for the truncated transition amplitudes for small values of $a,b,a'$ and $b'$. We first give an expression for the transition amplitude with respect to the full Kraus set $\fK_t$, which is
\begin{align}
T_{\fK_t}^{(a,b) \to (a',b')} =&
	\sum_{\ell,\ell' \in \bbN} r^\ell (1-r)
		\< b'_\rx, \ell'_\ry  | \tU_t | b_\rx, \ell_\ry\>
		\< a_\rx, \ell_\ry  | \tU_t^\dagger | a'_\rx, \ell'_\ry\> \notag\\
=&
	\sum_{\ell,\ell' \in \bbN}  r^\ell (1-r)
		\< b'_\rx, \ell'_\ry  | \sum_{\kappa ,\chi \in \bbN} z_u^{\kappa+\half} z_v^{\chi + \half} |\kappa_\ru, \chi_\rv\>
        \< \kappa_\ru , \chi_\rv | b_\rx, \ell_\ry\>\notag\\
&\quad \times		\< a_\rx, \ell_\ry  | \sum_{\kappa', \chi' \in \bbN}  z_u^{-\kappa'-\half} z_v^{-\chi'-\half}| \kappa'_\ru, \chi'_\rv\>
        \< \kappa'_\ru, \chi'_\rv | a'_\rx, \ell'_\ry\>. \label{eq:unsimplified-transition-amplitude}
\end{align}
The matrix elements in the expression above can be simplified by expressing them in the $x$ and $y$ coordinates of the original oscillators. In particular, the expression above becomes an integral of the product of rescaled Hermite functions.
To simplify notation, let $u_{x,y}:=  c_\ru ( \frac{x}{\mu c_\rx} \cos \theta +  \frac{\mu y}{c_\ry} \sin \theta  )$ and $v_{x,y}:=  c_\rv (- \frac{x}{\mu c_\rx} \sin \theta +  \frac{\mu y}{c_\ry} \cos \theta )$ denote the coordinates of the decoupled oscillators in the basis of the original oscillators. By making appropriate substitutions, we have that
\begin{align}
u_{x,y} &=  \sqrt{ \frac{m_\ry \omega_\ru  }{m_\rx \omega_\rx  }  } x \cos \theta
        + \sqrt{ \frac{m_\rx \omega_\ru  }{m_\ry  \omega_\ry }  } y \sin \theta \notag\\
v_{x,y} &=  -\sqrt{ \frac{m_\ry \omega_\rv }{m_\rx \omega_\rx }  } x \sin \theta
        + \sqrt{ \frac{m_\rx \omega_\rv }{m_\ry \omega_\ry }  } y \cos \theta.\label{eq:uxy_vxy}
\end{align}
\begin{comment}
Then
\begin{align}
&\sum_{\ell' \in \bbN}   \< \kappa'_\ru, \chi'_\rv | a'_\cX, \ell'_\cY\> \< b'_\cX, \ell'_\cY   |\kappa_\ru, \chi_\rv\>
=
   \< \kappa'_\ru, \chi'_\rv | \Bigl ( |a'_\cX\> \< b'_\cX| \otimes \bbone_\cY \Bigr)   |\kappa_\ru, \chi_\rv\> \notag\\
=&
\frac{1}{\sqrt{c_\ry}} \int_{\bbR^3} dx_1 dx_2 dy\
        \Bigl ( \psi_{a'}( x_1) \psi_{b'}( x_2) \Bigr)
        \psi_{\kappa'}(u_{x_1,y}) \psi_{\chi'}(v_{x_1,y})
        \psi_{\kappa} (u_{x_2,y}) \psi_{\chi} (v_{x_2,y}) \label{eq:kernel-ell}
\end{align}
Hence by substituting (\ref{eq:kernel-ell}) in (\ref{eq:unsimplified-transition-amplitude}) and applying Mehler's formula (\ref{eq:mehler2}) on the variable $\ell$, we get
\end{comment}
The summation indices in the transition amplitude corresponding to the full Kraus set $\fK_t$ in (\ref{eq:unsimplified-transition-amplitude}) are $\ell$ and $\ell'$ respectively. In this paper, we choose our truncated Kraus set to be $\Omega_{L,t}$, where only the summation over $\ell'$ is truncated.
% to be no bigger than some integer $L$, where $L \le 20$. The number 20 is chosen for the purpose of explicit illustration.

By applying Mehler's formula (\ref{eq:mehler2}) on the variable $\ell$, the expression for the truncated transition amplitude is
\begin{align}
 T_{ \Omega_{L,t}}^{(a,b) \to (a',b')} =& \sum_{\ell' \le L} \sum_{\substack{\kappa ,\chi \in \bbN \\ \kappa',\chi' \in \bbN} } z_u^{\kappa - \kappa'} z_v^{\chi-\chi'}
        f[a,b,a',b' ; \ell', \kappa, \kappa', \chi, \chi'] 
        % (\omega_\ru \omega_\rv)(\omega_\rx \omega_\ry)^{-1}
        \frac{\omega_\ru \omega_\rv}{\omega_\rx \omega_\ry}
      \sqrt{\frac{1-r}{\pi(1+r)} }
        \label{eq:truncated-transition-amplitude},
\end{align}
where the path-dependent and time-independent transition amplitudes $f[a,b,a',b' ; \ell', \kappa, \kappa', \chi, \chi']$ are
\begin{align}
&f[a,b,a',b' ; \ell', \kappa, \kappa', \chi, \chi'] :=
        \int_{{\bf x,y} \in \bbR^4} d {\bf x}\ d{\bf y}
        %\left( \sum_{\ell \in \bbN} r^\ell \psi_{\ell}( y_3) \psi_{\ell}( y_4) \right)
        \exp\left[{-\frac{1+r^2}{2(1-r^2)}\Bigl(y_3^2 - \frac{4r  y_3y_4}{1+r^2} +y_4^2 \Bigr) } \right]
        \notag\\
&		
\quad \times    \Bigl ( \psi_{a'}(x_1) \psi_{b'}(x_2) \psi_{b}(x_3) \psi_{a}(x_4) \Bigr)
                    \Bigl ( \psi_{\ell'}(y_1) \psi_{\ell'}(y_2)  \Bigr)
\notag\\
&\quad \quad \times
        \psi_{\kappa'} (u_{x_1,y_1}) \psi_{\chi'} (v_{x_1,y_1})
        \psi_{\kappa}  (u_{x_2,y_2}) \psi_{\chi}  (v_{x_2,y_2}) \notag\\
& \quad \quad \quad  \times
        \psi_{\kappa}  (u_{x_3,y_3}) \psi_{\chi}  (v_{x_3,y_3})
        \psi_{\kappa'} (u_{x_4,y_4}) \psi_{\chi'} (v_{x_4,y_4}) \label{eq:transition-amplitude-main},
\end{align}
where ${\bf x} = (x_1,x_2,x_3,x_4)$ and ${\bf y} = (y_1,y_2,y_3,y_4)$ are the rescaled position coordinates of the system and environment oscillator respectively. The integral $f$ can be more easily evaluated if we express it as a product of three integrals, in the sense that
\begin{align}
&f[a,b,a',b' ; \ell', \kappa, \kappa', \chi, \chi'] =
        I_{a',\ell',\kappa', \chi'}
        I_{b',\ell',\kappa, \chi}
        J_{b,a,\kappa , \chi ,\kappa', \chi' , r}
        \label{eq:transition-amplitude-main2},
\end{align}
where the integrals are
\begin{align}
I_{a',\ell',\kappa',\chi'}
    &:= \int_{\bbR^2}
             \Theta_{x,y}(a' ; \ell')
              \psi_{\kappa'} (u_{x,y})  \psi_{\chi'} (v_{x,y})
        \ dx \ dy \label{eq:Ifunction-definition}\\
J_{b,a,\kappa ,\chi, \kappa', \chi',r}
    &:= \int_{\bbR^4}
             \Theta_{w,x,y,z}(b,a ; r)
            \psi_{\kappa} (u_{w,y})  \psi_{\chi} (v_{w,y})
            \psi_{\kappa'} (u_{x,z})  \psi_{\chi'} (v_{x,z})
        \ dw \ dx \ dy \ dz. \label{eq:Jfunction-definition},
\end{align}
and the kernels are
%We collect the terms of the integrand of (\ref{eq:transition-amplitude-main}) that are supported mainly in a fixed sized ball about the origin in the following kernels, which we define as
\begin{align}
\Theta_{x,y} (i ; j ) &:= \psi_i(x) \psi_j (y)    \\
\Theta_{w,x,y,z} (i ,j ; r ) &:= \psi_i(w) \psi_j (x)
      \exp\left[
       -\frac{1+r^2}{2(1-r^2)} \left(y^2 - \frac{4r}{1+r^2} yz + z^2 \right) \right].
\end{align}
%We now rearrange (\ref{eq:transition-amplitude-main}) in a form that is easier to evaluate and obtain bounds on. In particular we have

Now our truncated transition amplitude is still a sum over an infinite number of integrals, and hence we intend to approximate it by the finite sum
%Hence we approximate the truncated transition amplitude $ T_{ \Omega_{L,t}}^{(a,b) \to (a',b')} $
%into a sum of its approximation term $T_{ \Omega_{L,t}, {\rm approx}}^{(a,b) \to (a',b')} $ and its error term $T_{ \Omega_{L,t}, {\rm error}}^{(a,b) \to (a',b')}$ where
%\begin{align}
% T_{ \Omega_{L,t}}^{(a,b) \to (a',b')}
%    =&      T_{ \Omega_{L,t}, {\rm approx}}^{(a,b) \to (a',b')}
%        +   T_{ \Omega_{L,t}, {\rm error}}^{(a,b) \to (a',b')} \label{eq:transition-amplitude-decomposition}
% \end{align}
%with a finite sum
 \begin{align}
 A_{ L,N,t }^{(a,b) \to (a',b')} =
 \sum_{\ell' \le L} 
 \sum_{\substack{
 					\kappa ,\chi \le N  \\ 
 					\kappa',\chi' \le N } 
	     } 
	z_u^{\kappa - \kappa'} z_v^{\chi-\chi'}
      f[a,b,a',b' ; \ell', \kappa, \kappa', \chi, \chi'] 
      % (\omega_\ru \omega_\rv)(\omega_\rx \omega_\ry)^{-1}
        \frac{\omega_\ru \omega_\rv}{\omega_\rx \omega_\ry}
       \sqrt{\frac{1-r}{\pi(1+r)} }
\label{eq:TA-approximation}
 %\\
%T_{ \Omega_{L,t}, {\rm approx}}^{(a,b) \to (a',b')} =
% \sum_{\ell' \le L} \sum_{\substack{\kappa ,\chi \le N  \\ \kappa',\chi' \le N } } z_u^{\kappa - \kappa'} z_v^{\chi-\chi'}
%        f[a,b,a',b' ; \ell', \kappa, \kappa', \chi, \chi'] \label{eq:TA-error}
\end{align}
for some positive integer $N$.
%We now justify why this approximation might work. Each integral that we have to sum over is a product of kernels labelled by $a,b,a',b'$ and $\ell'$, which are concentrated within a finite sized eight dimension ball about the origin. On the other hand, for large enough $\kappa ,\chi, \kappa'$ and $\chi'$, the corresponding Hermite functions have vanishing support within the same region (see Figure \ref{fig:integration-region}). Hence one can expect that when $a,b,a'$ and $b'$ are bounded, the summation in (\ref{eq:transition-amplitude-main}) converges with respect to $\kappa ,\chi,\kappa'$ and $\chi'$, and that a careful truncation of the summation will induce a negligible error.
%Note that our approximation to the truncated transition amplitude as shown in (\ref{eq:TA-approximation}) is explicitly computable in finite time, because of the finite number of individually computable terms.
In Theorem \ref{thm:qho-error-bound}, we prove that the absolute value of the error term by approximating the truncated transition amplitude (\ref{eq:truncated-transition-amplitude}) with (\ref{eq:TA-approximation}) vanishes as $N$ becomes large while $a,b,a',b'$ and $L$ remain small. The proof of our theorem uses mainly the Cauchy-Schwarz inequality, Lemma \ref{lem:integral-convergence1} and Lemma \ref{lem:integral-convergence2}. The intuition behind our technical lemmas are elementary consequences of the behavior of order $n$ Hermite functions in the oscillatory interval $[-\sqrt{n}, \sqrt{n}]$ and outside of it. Within the oscillatory interval, Hermite functions have amplitudes that vanish as $n$ gets large. Outside of the oscillatory region, Hermite functions have exponentially small amplitudes as their arguments becomes large. Thus we construct our upper bounds for the integral of the product of Hermite functions by performing the integration separately in two overlapping regions, as depicted in Figure \ref{fig:integration-region}.
%The technical Lemmas are there Hence we can expect the product of high order Hermite functions and low order Hermite functions to have negligible ell-one norms.
%(And attempt a pictoral explanation of this phenomenon.)
%
%  main technical tools used in the proof of our theorem are some properties of Hermite functions, which we provide in Section \ref{sec:hermite-function-bounds}. Table \ref{table:Hermite-one-norm} in Section \ref{sec:hermite-function-bounds} also contains numerically computed one-norms of Hermite functions, that are needed in the statement of our theorem. We state our theorem below.
%%%%%%%%%%%%%%%%%%%%%%%%%%%%%%%%%%%%%%%%
\begin{theorem}\label{thm:qho-error-bound}
Let $m_\rx, m_\ry , \omega_\rx, \omega_\ry, t>0, k\ge 0 $ and $0 \le r < 1$ be real numbers, and $D,L$ and $N$ be positive integers. 
Let $C$ be a constant that depends on $m_\rx, m_\ry, \omega_\rx, \omega_\ry$ and $k$ (see (\ref{eq:constant-C})), and $A,\tilde A, B, \tilde B$ be constants that depend on $D$ and $L$ (see (\ref{eq:constant-AB})).
Let $ T_{ \Omega_{L,t}}^{(a,b) \to (a',b')}$ be the truncated transition amplitude defined in (\ref{eq:truncated-transition-amplitude}) have approximation $ A_{ L,N,t }^{(a,b) \to (a',b')}  $ given by (\ref{eq:TA-approximation}).
Then for all integers $0 \le a, b, a', b' \le D$,
%the absolute value on the error term of approximating the truncated transition amplitude is 
\begin{align}
&\left|T_{ \Omega_{L,t}}^{(a,b) \to (a',b')} - 
 A_{ L,N,t }^{(a,b) \to (a',b')} \right|  \notag\\
\le & \left(
   \frac{4\sqrt{A^2 B}}{9(N-\frac{1}{2})^3} 
+\frac{4 \sqrt{A \tilde A B}}{3 N^{5/4} (N-\frac{1}{2})^{3/2}}
   \frac{e^{-N C/2}}{1-e^{-C/2} }
+ \frac{\sqrt{\tilde A^2 B}}{N^{5/2}} \frac{e^{-NC}}{(1-e^{-C/2})^2} +
\frac{ \sqrt{A^2 \tilde B}}{N^{5/2}} \frac{e^{-NC}}{(1-e^{-C/2})^2}
\right. \notag\\
 &\left.
+ \frac{2  \sqrt{A \tilde A \tilde B}}{N^{5/4} } 
   \frac{e^{-3 N C/2}}{(1-e^{-C/2})(1-e^{-C}) }
+\sqrt{ \tilde A^2 \tilde B} \frac{e^{-2NC}}{(1-e^{-C})^2} \right)^2  
       % (\omega_\ru \omega_\rv)(\omega_\rx \omega_\ry)^{-1}
        \frac{\omega_\ru \omega_\rv}{\omega_\rx \omega_\ry}
         \sqrt{\frac{1-r}{\pi(1+r)} }
          (L+1)
.
 \end{align}
\end{theorem}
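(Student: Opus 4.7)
The plan is to identify the error $T - A$ as the tail of (\ref{eq:truncated-transition-amplitude}) discarded by the finite truncation in (\ref{eq:TA-approximation}), namely the contribution from those quadruples $(\kappa, \chi, \kappa', \chi')$ with $\max(\kappa, \chi, \kappa', \chi') > N$. Since $|z_u^{\kappa - \kappa'} z_v^{\chi - \chi'}| = 1$, the triangle inequality over $\ell' \in \{0, \ldots, L\}$ peels off the factor $(L+1)$ together with the prefactor $\frac{\omega_u \omega_v}{\omega_x \omega_y}\sqrt{(1-r)/(\pi(1+r))}$, and reduces the problem to a uniform bound on the modulus of the inner tail sum of $|f[a,b,a',b'; \ell', \kappa, \kappa', \chi, \chi']|$ for each fixed $\ell'$. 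I would then substitute the product factorization $f = I_{a', \ell', \kappa', \chi'}\, I_{b', \ell', \kappa, \chi}\, J_{b, a, \kappa, \chi, \kappa', \chi', r}$ from (\ref{eq:transition-amplitude-main2}), so that the tail becomes a double sum over pair indices $p_1 = (\kappa, \chi)$ and $p_2 = (\kappa', \chi')$ of $|I(p_1)| |I(p_2)| |J(p_1, p_2)|$.

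The key step is Cauchy--Schwarz applied after partitioning the tail into the three disjoint sub-regions where $(p_1, p_2) \in (L \times H) \cup (H \times L) \cup (H \times H)$, with $L = \{p : \max(p) \le N\}$ the \emph{low} set of pair indices and $H$ its \emph{high} complement. On each sub-region $R_1 \times R_2$, Cauchy--Schwarz separates the two $I$-factors from the $J$-factor as $\sqrt{(\sum_{p_1 \in R_1} |I|^2)(\sum_{p_2 \in R_2} |I|^2)} \cdot \sqrt{\sum_{R_1 \times R_2} |J|^2}$. Lemma \ref{lem:integral-convergence1} supplies two families of $I$-norm bounds: one with polynomial $N$-decay on the low region (captured by $A$ and factors such as $(N - \tfrac{1}{2})^{-3/2}$, reflecting the $O(n^{-1/4})$ Hermite amplitude on the oscillatory interval $[-\sqrt{n}, \sqrt{n}]$) and one with exponential decay on the high region (captured by $\tilde A$ and factors such as $e^{-NC/2}/(1 - e^{-C/2})$, reflecting the super-exponential Hermite decay summed geometrically outside that interval, as depicted in Figure \ref{fig:integration-region}). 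Lemma \ref{lem:integral-convergence2} supplies analogous $B$ (polynomial) and $\tilde B$ (exponential) bounds for the $J$-norm.

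Collecting the sub-region contributions and repeatedly using $\sqrt{x+y} \le \sqrt{x} + \sqrt{y}$ to split sums under square roots, the bounds assemble into a sum of six products $\sqrt{A_i A_j B_k}$ with $A_i, A_j \in \{A, \tilde A\}$ and $B_k \in \{B, \tilde B\}$, each weighted by the appropriate polynomial/exponential factor in $N$ from the lemmas. The compact squared form in the theorem statement then follows by grouping the three $\sqrt{B}$-terms as $(\sqrt{A}\, p_A + \sqrt{\tilde A}\, p_{\tilde A})^2 \sqrt{B}$ and the three $\sqrt{\tilde B}$-terms as $(\sqrt{A}\, p'_A + \sqrt{\tilde A}\, p'_{\tilde A})^2 \sqrt{\tilde B}$ with the explicit Hermite decay weights $p_A, p_{\tilde A}, p'_A, p'_{\tilde A}$, and then bounding the resulting expression above by the square of a single six-term sum.

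The main obstacle is tracking the precise polynomial and exponential decay rates through the Cauchy--Schwarz step and sub-region decomposition, so that the estimates of Lemmas \ref{lem:integral-convergence1}--\ref{lem:integral-convergence2} deliver exactly the coefficients $\tfrac{4}{9}, \tfrac{4}{3}, 1, 1, 2, 1$ and the precise powers of $(N - \tfrac{1}{2})$, $N$, $e^{-NC/2}$, $e^{-NC}$ displayed in the theorem. This demands careful bookkeeping of the $I$ and $J$ integrals on the oscillatory versus super-oscillatory regions of Hermite functions, together with the verification that each tail quadruple is accounted for in exactly one sub-region so the decomposition is exhaustive but non-redundant.
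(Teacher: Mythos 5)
Your skeleton matches the paper's at the outer layer: triangle inequality over $\ell'$ to extract the factor $(L+1)\,\frac{\omega_\ru\omega_\rv}{\omega_\rx\omega_\ry}\sqrt{\tfrac{1-r}{\pi(1+r)}}$, the factorization $f=I\,I\,J$, Lemmas \ref{lem:integral-convergence1}--\ref{lem:integral-convergence2}, subadditivity of the square root, and geometric/$p$-series tail estimates. But you put the Cauchy--Schwarz step in the wrong place, and this creates two concrete problems. First, Lemma \ref{lem:integral-convergence1} bounds $\int\psi_n(ax+by)^2|\psi_j(x)\psi_k(y)|$ --- an integral containing a \emph{single squared} Hermite function of one linear form --- whereas $I_{a',\ell',\kappa',\chi'}$ contains the mixed product $\psi_{\kappa'}(u_{x,y})\psi_{\chi'}(v_{x,y})$; likewise for $J$. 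The paper's Cauchy--Schwarz is applied \emph{inside each integral} to split off $\sqrt{\int|\Theta|\,\psi_{\kappa'}(u)^2}$ and $\sqrt{\int|\Theta|\,\psi_{\chi'}(v)^2}$, which is precisely what makes the lemmas applicable (and is where the constants $c_1,c_2,C$ of (\ref{eq:constant-C}) enter). Your outer Cauchy--Schwarz over the index sums, separating $\sum|I|^2$ from $\sum|J|^2$, never produces integrals of the lemmas' form, so the step ``Lemma \ref{lem:integral-convergence1} supplies $I$-norm bounds'' has no justification as written. Second, the inner Cauchy--Schwarz is also what decouples $(\kappa,\kappa')$ from $(\chi,\chi')$, so that $|f|\le\sqrt{W_{\kappa,\kappa'}W_{\chi,\chi'}}$ and the quadruple tail sum factors exactly as $\bigl(\sum_{\kappa,\kappa'}\sqrt{W_{\kappa,\kappa'}}\bigr)^2$; this is the only reason the theorem's bound is the \emph{square} of a six-term sum. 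Your route does not produce that structure, and your closing remark that one can ``bound the resulting expression above by the square of a single six-term sum'' is doing all the work without an argument.

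The sub-region decomposition is the second genuine obstruction. Your cover $(L\times H)\cup(H\times L)\cup(H\times H)$ of the tail $\{\max(\kappa,\chi,\kappa',\chi')>N\}$ is exhaustive, but on the mixed regions the low-index factor $\sum_{p_1\in L}|I(p_1)|^2$ does not decay in $N$ at all (it is $O(1)$ by normalization), so those contributions decay only like a fractional power of $N$ coming from the single high pair --- far slower than the $(N-\tfrac12)^{-6}$ leading order of the stated bound. The paper's displayed estimate is $\bigl(\sum_{\kappa,\kappa'\ge N}\sqrt{W_{\kappa,\kappa'}}\bigr)^2$, i.e.\ it sums only over the region where \emph{all four} indices are at least $N$; that restriction is what yields the perfect square and the $N^{-6}$ rate. (Whether that region in fact exhausts the discarded tail is a separate question about the paper's own bookkeeping, but the point for your proposal is that the three-sub-region decomposition you chose cannot assemble into the inequality you are asked to prove.) To repair the proposal: apply Cauchy--Schwarz inside $I$ and $J$ first, obtain (\ref{ineq:I-upper-bound}) and (\ref{ineq:J-upper-bound}), multiply to get $W_{\kappa,\kappa'}W_{\chi,\chi'}$, expand $\sqrt{W_{\kappa,\kappa'}}$ by subadditivity into eight terms, sum each over $\kappa,\kappa'\ge N$ using $\sum_{X\ge N}X^{-5/2}\le\tfrac23(N-\tfrac12)^{-3/2}$ and $\sum_{X\ge N}e^{-XC/2}=\tfrac{e^{-NC/2}}{1-e^{-C/2}}$, and square.
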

\begin{remark}\label{remark:qho}
Observe that when $C$ is a very large number, the upper bound of the above theorem is dominated by the expression
\[
\frac{4 A^2 B  }{81 (N-\half)^{6}  }
\left( 
      % (\omega_\ru \omega_\rv)(\omega_\rx \omega_\ry)^{-1}
        \frac{\omega_\ru \omega_\rv}{\omega_\rx \omega_\ry}
         \sqrt{\frac{1-r}{\pi(1+r)} }
          (L+1) \right)
         \le \frac{14.7103 (L+1)  
         \tilde n_D^4 \tilde n_L^2 
         }{ (N-\half)^{6}  }
\left( 
      % (\omega_\ru \omega_\rv)(\omega_\rx \omega_\ry)^{-1}
        \frac{\omega_\ru \omega_\rv}{\omega_\rx \omega_\ry}
         \sqrt{\frac{1-r}{ 1+r } }
           \right)
\]
as $N$ becomes large (and $\tilde n_i$ is defined as  $\left( \max_{0\le j \le i} \|  \psi_j \|_1 \right)$.
\end{remark}
\begin{remark}\label{remark:parity}
If the parity of $a+b$ differs from that of $a'+b'$, the approximate truncated amplitude is necessarily identically zero for all positive integers $L$ and $N$. This is a result of a simple parity counting argument after noting that the I-type integrals (\ref{eq:Ifunction-definition}) and J-type integrals (\ref{eq:Jfunction-definition}) are zero whenever the parity of the sum of their indices are odd. Hence parity is conserved with regards to our physical model.
\end{remark}
\begin{remark}\label{remark:tighter-bound}
The bounds of Theorem \ref{thm:qho-error-bound} can be substantially tightened using information pertaining to the I-type integrals (\ref{eq:Ifunction-definition}), which are substantially simplier to evaluate than the J-type integrals (\ref{eq:Jfunction-definition}). Using bounds for the J-type integrals (\ref{ineq:J-upper-bound}), we have that for integer $N'$ greater than $N$, 
\begin{align}
\left|T_{ \Omega_{L,t}}^{(a,b) \to (a',b')} - 
 A_{ L,N,t }^{(a,b) \to (a',b')} \right| 
		\le  &
\sum_{0 \le \ell \le L } 
\sum_{ 
\substack{  N < \kappa, \kappa' \le N' \\
			 N < \chi, \chi' \le N' }
} 
|I_{a',\ell',\kappa',\chi'} 
I_{b',\ell', \kappa,\chi}|
 B (\kappa \kappa' \chi \chi')^{-5/2} \notag\\
 &+ 
 \left|
 |T_{ \Omega_{L,t}}^{(a,b) \to (a',b')} - 
 A_{ L,N',t }^{(a,b) \to (a',b')} \right| .
\end{align}
\end{remark}
\begin{proof}[Proof of Theorem \ref{thm:qho-error-bound}]
Our goal is to obtain upper bounds on each of the integrals $I$ and $J$ defined in (\ref{eq:Ifunction-definition}) and (\ref{eq:Jfunction-definition}). Applying the Cauchy-Schwarz inequality on $| I_{a',\ell',\kappa',\chi'} |$ gives
\begin{align}
| I_{a',\ell',\kappa',\chi'} | \le
        \sqrt{\int_{\bbR^2} \bigl|  \Theta_{x,y}(a' ; \ell')  \bigr |  \psi_{\kappa'} (u_{x,y})^2
        \ dx \ dy}
        \sqrt{\int_{\bbR^2} \bigl|   \Theta_{x,y}(a' ; \ell')  \bigr|  \psi_{\chi'} (v_{x,y})^2
        \ dx \ dy}.
\end{align}
We similarly use the Cauchy-Schwarz inequality to obtain an upper bound of the absolute value of (\ref{eq:Jfunction-definition}), which is
\begin{align}
|J_{b,a,\kappa ,\chi, \kappa', \chi',r}| \le&
        \sqrt{\int_{\bbR^2} \bigl|  \Theta_{w,x,y,z}(b,a ; r) \bigr |
            \psi_{\kappa} (u_{w,y})^2
            \psi_{\kappa'} (u_{x,z})   ^2
        \ dx \ dy} \notag\\
&\quad \times
        \sqrt{\int_{\bbR^2} \bigl|   \Theta_{w,x,y,z}(b,a ; r) \bigr|
              \psi_{\chi} (v_{w,y})^2
              \psi_{\chi'} (v_{x,z})^2
        \ dx \ dy}.
\end{align}
For the purpose of using Lemma \ref{lem:integral-convergence1} and Lemma \ref{lem:integral-convergence2}, define the constants
\begin{align}
c_1 &= \min \left\{
        \sqrt{ \frac{m_\ry \omega_\ru}{ m_\rx \omega_\rx }  } \cos \theta,
        \sqrt{ \frac{m_\rx \omega_\ru}{ m_\ry \omega_\ry }  } \sin \theta
        \right\}\notag \\% \label{eq:constant-c1} \\
c_2 &= \min \left\{
        \sqrt{ \frac{m_\ry \omega_\rv}{ m_\rx \omega_\rx }  } \sin \theta,
        \sqrt{ \frac{m_\rx \omega_\rv}{ m_\ry \omega_\ry }  } \cos \theta
        \right\} \notag \\ %\label{eq:constant-c2} 
        C &= \min\{ 1/(4c_1^2), 1/(4c_2^2) \}   \label{eq:constant-C}
 \end{align}
 and 
 \begin{align}
A &= 1.74^2 
     \left( \max_{0\le j \le D} \|  \psi_j \|_1 \right)
     \left( \max_{0\le j \le L} \|  \psi_j \|_1 \right)
, \quad \tilde A = (4.74)(2e^{-\half})^{D+L}\sqrt{D! D^D L! L^L} \notag\\
B &= 57.6 
     \left( \max_{0\le j \le D} \|  \psi_j \|_1 \right)^2
, \quad \tilde B = (39.6)(2e^{-1})^{D} D! D^D  
\label{eq:constant-AB}.
\end{align}
Noting that $\sin \theta$ and $\cos \theta$ are positive by definition (see (\ref{eq:cos-theta-sin-theta})), and using the definitions of $c_1, c_2$ and $C$ with Lemma \ref{lem:integral-convergence1}, we have that
\begin{align}
| I_{a',\ell',\kappa',\chi'} |&  \le
\sqrt{
        A (\kappa')^{-5/2}
        + \tilde A
         e^{- \kappa' C }
             }
             %\notag\\
             %& \quad \times
\sqrt{
         A (\chi')^{-5/2}
        + \tilde A
         e^{- \chi'  C  }
             } \label{ineq:I-upper-bound}
\end{align}
and we have a similar upper bound of $|I_{b',\ell',\kappa, \chi}|$. Using Lemma \ref{lem:integral-convergence2}, we have that
\begin{align}
|J_{b,a,\kappa ,\chi, \kappa', \chi',r}|
&\le
   \sqrt{
         B (\kappa \kappa')^{-5/2}
      + \tilde B
        e^{-(\kappa + \kappa') C }
   }
%  \notag\\
%&\times
   \sqrt{
        B (\chi \chi')^{-5/2}
      + \tilde B
        e^{-(\chi + \chi') C   }
   }. \label{ineq:J-upper-bound}
\end{align}
By expanding out the terms of the products of the upper bounds given by (\ref{ineq:I-upper-bound}) and (\ref{ineq:J-upper-bound}), our upper bound on the absolute value of  (\ref{eq:transition-amplitude-main2})
\[
\Bigl |
    f[a,b,a',b' ; \ell', \kappa, \kappa', \chi, \chi']
 \Bigr | 
 \le \sqrt{
W_{\kappa,\kappa'} W_{\chi,\chi'}
}
\]
where 
\begin{align}
W_{\kappa, \kappa'} :=& 
    \frac{A^2 B}{\kappa^5 (\kappa')^5}
+ \frac{A \tilde A B}{\kappa^{5} (\kappa')^{5/2} }  e^{-\kappa' C}
+ \frac{A \tilde A B}{\kappa^{5/2} (\kappa')^{5} } e^{-\kappa C}
+ \frac{\tilde A^2 B}{\kappa^{5} (\kappa')^{5/2} }  e^{-(\kappa + \kappa' )C}
\notag\\ & 
+  \frac{A^2 \tilde B}{\kappa^{5/2} (\kappa')^{5/2} } e^{-(\kappa+\kappa' )C}
+ \frac{A \tilde A  \tilde B}{\kappa^{5/2}  } e^{-(\kappa+2\kappa' )C}
+ \frac{A \tilde A  \tilde B}{ (\kappa')^{5/2} } e^{-(2\kappa+\kappa' )C}
+ {\tilde A^2  \tilde B}  e^{-(2\kappa+2\kappa' )C}\notag
.
\end{align}
By the subadditivity of the square root function, we have that 
\begin{align}
\sqrt{W_{\kappa, \kappa'} } \le& 
    \frac{\sqrt{A^2 B}}{\kappa^{5/2} (\kappa')^{5/2}}
+ \frac{\sqrt{A \tilde A B}}{\kappa^{5/2} (\kappa')^{5/4} }  e^{-\kappa' C/2}
+ \frac{\sqrt{A \tilde A B}}{\kappa^{5/4} (\kappa')^{5/2} } e^{-\kappa C/2}
+ \frac{\sqrt{\tilde A^2 B}}{\kappa^{5/2} (\kappa')^{5/4} }  e^{-(\kappa + \kappa' )C/2}
\notag\\ & 
+  \frac{\sqrt{A^2 \tilde B}}{\kappa^{5/4} (\kappa')^{5/4} } e^{-(\kappa+\kappa' )C/2}
+ \frac{\sqrt{A \tilde A  \tilde B}}{\kappa^{5/4}  } e^{-(\kappa+2\kappa' )C/2}
+ \frac{\sqrt{A \tilde A  \tilde B}}{ (\kappa')^{5/4} } e^{-(2\kappa+\kappa' )C/2}
+ \sqrt{\tilde A^2  \tilde B}  e^{-(\kappa+\kappa' )C}\notag
.
\end{align}
The summation of the above expression over $\kappa$ and $\kappa'$ can be seen as an inner product of vectors with exponentially decaying terms and polynomially decaying terms respectively. We hence apply H\"{o}lder's inequality for sequence spaces on the summation of the above expression over $\kappa$ and $\kappa'$. We thereby obtain an upper bound of the sum in terms of the one-norm of the vector with exponentially decaying terms, and the infinity-norm of the vector with polynomially decaying terms:
\begin{align} 
   \sum_{\kappa, \kappa' \ge N}
   \sqrt{W_{\kappa,\kappa'}}
 \le&
   \sqrt{A^2 B} \Bigl(\sum_{X \ge N} X^{-5/2} \Bigr)^2
+\frac{2 \sqrt{A \tilde A B}}{N^{5/4} } \Bigl(\sum_{X \ge N} X^{-5/2} \Bigr) 
   \frac{e^{-N C/2}}{1-e^{-C/2} }
+ \frac{\sqrt{\tilde A^2 B}}{N^{5/2}} \frac{e^{-NC}}{(1-e^{-C/2})^2} \notag\\
&
+ 
  \frac{ \sqrt{A^2 \tilde B}}{N^{5/2}} \frac{e^{-NC}}{(1-e^{-C/2})^2}
+\frac{2  \sqrt{A \tilde A \tilde B}}{N^{5/4} } 
   \frac{e^{-3 N C/2}}{(1-e^{-C/2})(1-e^{-C}) }
+\sqrt{ \tilde A^2 \tilde B} \frac{e^{-2NC}}{(1-e^{-C})^2}
. \notag
\end{align}
Now the integral $\int_{N-\frac{1}{2}}^\infty x^{-5/2} dx$ is an upper bound of the sum $\sum_{X \ge N}^\infty X^{-5/2}$ by the convexity of the integrand. Hence $\sum_{X \ge N}^\infty X^{-5/2} \le \frac{2}{3}(N-\frac{1}{2})^{-3/2}$, and we can subsitute this bound into our upper bound of the square of $\sum_{\kappa, \kappa' \ge N}W_{\kappa,\kappa'}$ summed over the index $0 \le \ell' \le L$ to get the result.
%our upper bound on the error of the truncated transition amplitude is an upp
%\begin{align*}
%&5.3(10)^{13}\left(\frac{e^{4.09544/N}}{(N+\half)^4}\right)^4 \le \frac{2.08(10^{11})e^{16.3818/N}}{(N+\half)^{16}}. %& \qed
% &\left(  \frac{A^2 B e^{8.19088/N}}
 %           {16(N+\half)^8} 
%+\frac{ A \tilde A B e^{4.09544/N} e^{-N C}}
 %       {2 N^{5/2} (N+\half)^4 (1-e^{-C})}
%+ \frac{(\tilde A^2 B+ A^2 \tilde B)e^{-2NC}}
  %        {N^5(1-e^{-C})^2} \right. \notag\\
%& \left.
%+ 
  %\frac{ A^2 \tilde B}{N^5} \frac{e^{-2NC}}{(1-e^{-C})^2}
%+\frac{2 A \tilde A \tilde B e^{-3 N C}}
 %         {N^{5/2} (1-e^{-C})(1-e^{-2C}) } 
%+ \frac{\tilde A^2 \tilde B e^{-4NC}}{(1-e^{-2C})^2}\right)^2.
%\end{align*}
\end{proof}
%%%%%%
%%%%%%
\section{Bounds on Hermite functions}\label{sec:hermite-function-bounds}
This section provides the main technical lemmas that are used to obtain error bounds on our approximation to our truncated transition amplitudes.
The main technical tools that we use in this section are Alzer's sharp bounds on the gamma function \cite{Alz09} and bounds on the Dominici's asymptotic approximation of Hermite functions with error estimates by Kerman, Huang and Brannan \cite{KHB09}.
\begin{lemma} \label{lem:hermite-oscillatory}
For all positive integers $n$ and reals $ x \in [-\sqrt{n}, \sqrt{n}]$, we have
$ |\psi_n(x)|  < 1.74 n^{-5/4}$.
\end{lemma}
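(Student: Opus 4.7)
The plan is to combine Dominici's asymptotic expansion of the Hermite functions with the explicit error estimates of Kerman, Huang, and Brannan, together with Alzer's sharp two-sided bounds on the Gamma function, to obtain a fully explicit pointwise bound on $|\psi_n(x)|$ on the interval $[-\sqrt{n}, \sqrt{n}]$.

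First I would restrict to $x \in [-\sqrt{n}, \sqrt{n}]$ and observe that $x^2/(2n+1) \le 1/2$ throughout this interval, so $x$ sits strictly inside the classical turning points $\pm\sqrt{2n+1}$. This places us in the deep oscillatory region, where Dominici's approximation is valid and where the amplitude factor $(1 - x^2/(2n+1))^{-1/4}$ appearing in it is uniformly bounded above by the absolute constant $2^{1/4}$, independent of both $n$ and $x$.

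Next I would write $\psi_n(x)$ via Dominici as a principal oscillatory term (of the form amplitude$\,\times\cos$(phase)) plus a remainder. The principal term inherits the normalization $\sqrt{2^n n!\sqrt{\pi}}$ from the definition of $\psi_n$; using Alzer's sharp lower and upper bounds on $\Gamma(n+1)$ I would replace this normalization by an entirely explicit function of $n$ with no asymptotic slack. Combined with the amplitude bound from the previous paragraph, this produces an explicit pointwise bound on the principal term. The remainder term is controlled by the Kerman-Huang-Brannan estimate, whose numerical constants can again be made explicit via Alzer's bounds. Summing the two gives an inequality of the desired shape for all $n$ at least as large as some small explicit threshold $n_0$.

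For the finite set of indices $1 \le n < n_0$ where the asymptotic estimates are not yet tight, I would verify the claim by direct elementary calculus: maximize $|e^{-x^2/2} H_n(x)/\sqrt{2^n n!\sqrt{\pi}}|$ on the closed interval $[-\sqrt{n}, \sqrt{n}]$ using the explicit polynomial form of $H_n(x)$ and compare with the right-hand side. The main obstacle I anticipate is the bookkeeping of constants: Alzer's inequalities, the amplitude bound, and the Kerman-Huang-Brannan remainder all introduce explicit numerical factors that must be multiplied through and simplified very carefully in order to keep the accumulated constant below the stated value $1.74$, since that constant is not particularly loose. The difficulty is thus arithmetic rather than conceptual.
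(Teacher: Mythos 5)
Your decomposition of $\psi_n$ into Dominici's principal oscillatory term plus a Kerman--Huang--Brannan remainder, with Alzer's inequalities to make the normalization explicit, uses exactly the ingredients of the paper's proof. But the step ``summing the two gives an inequality of the desired shape'' cannot be carried out: the principal term of Dominici's approximation in the oscillatory region has envelope proportional to $(2n+1-x^2)^{-1/4}$, and your own observation that this factor is uniformly comparable to $(2n+1)^{-1/4}$ on $[-\sqrt{n},\sqrt{n}]$ shows that the principal term decays only like $n^{-1/4}$. It is only the Kerman--Huang--Brannan \emph{remainder} that, after normalization, is of order $n^{-5/4}$. The sum of an $O(n^{-1/4})$ term and an $O(n^{-5/4})$ term is $O(n^{-1/4})$, so no threshold $n_0$ and no amount of constant bookkeeping will produce $1.74\,n^{-5/4}$.

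In fact the statement itself is false, so the failure is not yours: for $n=4$ and $x=0\in[-2,2]$ one has $H_4(0)=12$ and hence $\psi_4(0)=12/\sqrt{2^4\cdot 4!\,\sqrt{\pi}}\approx 0.460$, whereas $1.74\cdot 4^{-5/4}\approx 0.308$; more generally $\psi_{2m}(0)=\sqrt{(2m)!}\,/\bigl(m!\,2^{m}\pi^{1/4}\bigr)\sim 2^{1/4}\pi^{-1/2}(2m)^{-1/4}$, which exceeds $1.74\,(2m)^{-5/4}$ for every $m\ge 2$. The paper's own proof arrives at $1.74\,n^{-5/4}$ by quoting the Kerman--Huang--Brannan estimate --- which bounds the \emph{error} of Dominici's formula and is the source of the extra factor $n^{-1}$ --- as though it bounded $|\psi_n(x)|$ itself, i.e.\ it silently drops the principal term that your proposal correctly retains. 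What your method, honestly executed, actually yields on $[-\sqrt{n},\sqrt{n}]$ is a bound of the form $C\,n^{-1/4}$; the downstream estimates in Lemma \ref{lem:integral-convergence1}, Lemma \ref{lem:integral-convergence2} and Theorem \ref{thm:qho-error-bound} rely on the $n^{-5/4}$ rate and would need to be reworked with this weaker decay.
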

\begin{proof}
This proof combines Alzer's sharp bounds on the gamma function \cite{Alz09} with uniform bounds on the envelope of the Hermite functions in the oscillatory region by Kerman, Huang and Brannan \cite{KHB09}. Using Kerman, Huang and Brannan's result (see equation (2.1) and (1.4) in \cite{KHB09}), for $x \in [-\sqrt{n}, \sqrt{n}]$ we have
\begin{align*}
|\psi_n(x)| \le 2^{-3/4}\sqrt{35} \frac{\sqrt{n!}\pi^{-1/4}2^{-n/2}}{\sqrt{2} \Gamma((n/2)+1)} n^{-1}.
\end{align*}
Alzer's sharp bounds for the gamma function are that for all $n > 0$,
\begin{align*}
 1<
\frac{  \Gamma(n+1) }{\sqrt{2 \pi n} \left(\frac{n}{e} \right)^n \left(n \sinh \frac{1}{n} \right)^{n/2} }
  < 1+\frac{1}{1620}{n^5} .
\end{align*}
Note that for real $n\ge 1 $, we have $1 \le \left(n \sinh \frac{1}{n} \right)^{n/2} < 1.085$. Hence we have that for positive integers $n$,
\begin{align*}
\frac{\sqrt{\Gamma(n+1)}  }{\Gamma(\frac{n}{2} + 1)}
    < \frac  { \sqrt{\sqrt{2 \pi n}            \left(\frac{n}{e} \right)^n  (1.085)(1+\frac{1}{1620 n^5} )}}
             { \sqrt{\pi n }    \left(\frac{n}{2e} \right)^{n/2} }
    <  0.9308 (2^{n/2}   ) n^{-1/4}.
\end{align*}
Hence Kerman, Huang and Brannan's upper bound on the envelope of the Hermite function for $x \in [-\sqrt{n} , \sqrt{n}]$ and $n \ge 1$ becomes
\begin{align*}
|\psi_n(x)| \le     2^{-3/4}\sqrt{35} \frac{\pi^{-1/4}2^{-n/2}}{\sqrt{2} }
                        \frac{\sqrt{ \Gamma(n+1) } }{\Gamma(\frac{n}{2}+1)} n^{-1}
            <      1.74 n^{-5/4}.%& \qed
\end{align*}
\end{proof}
%%%%%%%%%%%%%
The next lemma provides a rather coarse upper bound on the absolute value of the Hermite function, with maximum utility in the monotonic region of the Hermite function.
\begin{lemma} \label{lem:hermite-monotonic}
For all reals $|x| > 1$ and integers $n \ge 0$,
we have
$|\psi_n(x)|
        \le 2^n \sqrt{\frac{n! n^n}{e^n \sqrt{\pi}}} e^{-x^2/4}$.
\end{lemma}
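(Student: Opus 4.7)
The plan is to obtain a bound on $|H_n(x)| e^{-x^2/4}$ that depends only on $n$, and then translate it into the claimed bound on $|\psi_n(x)|$ via the definition $\psi_n(x) = H_n(x)e^{-x^2/2}/\sqrt{2^n n!\sqrt{\pi}}$. First I apply the triangle inequality to the explicit power-series expansion $H_n(x) = \sum_{k=0}^{\lfloor n/2\rfloor}(-1)^k n!(2x)^{n-2k}/(k!(n-2k)!)$ to get $|H_n(x)|\le H_n^{+}(|x|)$, where $H_n^{+}(y):=\sum_{k}n!(2y)^{n-2k}/(k!(n-2k)!)$ is the sign-positive companion polynomial, and in particular has the entire generating function $\sum_{n}H_n^{+}(y) z^n/n! = e^{2yz+z^2}$.

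Next I apply Cauchy's integral formula to this generating function on the circle $|z|=r$. A direct maximization of $|e^{2yz+z^2}|$ along this contour gives $\max_{|z|=r}|e^{2yz+z^2}| = e^{2|y|r+r^2}$, and hence $H_n^{+}(|x|)\le n!\,e^{2|x|r+r^2}/r^n$ for every $r>0$. The key algebraic step is completing the square to extract the target Gaussian factor: $2|x|r+r^2-\tfrac{x^2}{4} = 5r^2-\tfrac{1}{4}(|x|-4r)^2\le 5r^2$. This yields $|H_n(x)|e^{-x^2/4}\le n!\,e^{5r^2}/r^n$, and optimizing in $r$ (at $r=\sqrt{n/10}$) gives the uniform estimate $|H_n(x)|e^{-x^2/4}\le n!(10e/n)^{n/2}$. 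Splitting $e^{-x^2/2} = e^{-x^2/4}\cdot e^{-x^2/4}$ on the $\psi_n$ side and dividing by $\sqrt{2^n n!\sqrt{\pi}}$ then produces a bound of the form $|\psi_n(x)|\le C_n\, e^{-x^2/4}$ with $C_n$ depending only on $n$.

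To hit the stated constant $2^n\sqrt{n! n^n/(e^n\sqrt{\pi})}$ exactly, I would then compare $n!(10e/n)^{n/2}$ with the target Hermite constant $2^{3n/2}n!(n/e)^{n/2}$: the ratio equals $\bigl(5e^2/(4n^2)\bigr)^{n/2}$, so the Cauchy bound above already implies the claim for $n\ge 4$. For $n\in\{0,1,2,3\}$ I would verify the bound directly by elementary calculus on the explicit polynomials $H_n$; for instance, for $n=1$ the claim reduces to $|x|e^{-x^2/4}\le\sqrt{2/e}$, with equality at $|x|=\sqrt{2}$, so the bound is actually tight there. The hypothesis $|x|>1$ enters at this stage only to certify that we are outside the small-$|x|$ region where the oscillatory-region bound of Lemma \ref{lem:hermite-oscillatory} would apply instead, so that the resulting monotonic-region bound is the operative one.

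The main obstacle I anticipate is the clean matching of constants between the saddle-point-optimized Cauchy estimate and the stated form. A straightforward generating-function argument produces $(10e/n)^{n/2}$, whereas the lemma asks for the somewhat sharper $2^{3n/2}(n/e)^{n/2}$; resolving this cleanly for small $n$ likely requires either a sharper completion-of-the-square step that uses the hypothesis $|x|>1$ to keep $(|x|-4r)^2$ bounded away from zero, or an appeal to Stirling-type sharp bounds on $n!$ as in Alzer \cite{Alz09} (already exploited in the proof of Lemma \ref{lem:hermite-oscillatory}), rather than the plain Cauchy integral estimate alone.
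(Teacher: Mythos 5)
Your proof is correct, and while it shares the paper's overall skeleton — expand $H_n$ in its Maclaurin series, apply the triangle inequality, and absorb the resulting polynomial growth into one factor of $e^{-x^2/4}$ — the key technical step is genuinely different. The paper controls the coefficient sum crudely, bounding the $\lfloor n/2\rfloor+1$ terms by (number of terms) $\times$ (largest term) to get $|H_n(x)|\le n!\,n\,(2|x|)^n$ (this is the only place the hypothesis $|x|>1$ is used), and then invokes $\sup_{x}|x|^n e^{-x^2/4}=(2n/e)^{n/2}$; note that, as written, this chain actually leaves a stray factor of $n$ relative to the stated constant, which the paper drops without comment. Your route — Cauchy's estimate on the generating function $e^{2yz+z^2}$ of the sign-positive companion polynomial, completing the square to peel off $e^{-x^2/4}$, and optimizing at $r=\sqrt{n/10}$ — yields the uniform bound $|H_n(x)|e^{-x^2/4}\le n!(10e/n)^{n/2}$ valid for all real $x$, and your ratio computation $(5e^2/(4n^2))^{n/2}\le 1$ settles the claim for $n\ge 4$, with the four cases $n\le 3$ checked directly (your $n=1$ computation showing equality at $|x|=\sqrt{2}$ confirms the constant cannot be improved there). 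What the paper's argument buys is brevity and no case split; what yours buys is a bound that needs no hypothesis on $x$ and a constant that is verifiably correct for every $n$ — indeed your more careful bookkeeping is arguably what is needed to justify the constant exactly as stated. The concern you raise at the end about matching constants is already resolved by your own ratio computation plus the small-$n$ checks, so no appeal to Alzer's gamma-function bounds is required.
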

\begin{proof}
Using the Maclaurin decomposition of the Hermite polynomial $H_n(x)$ \cite{erdelyi}, we get
\begin{align*}
|\psi_n(x)| &\le n! n (2|x|)^n \frac{e^{-x^2/2}}{\sqrt{2^n n! \sqrt{\pi}}}
            = \sqrt{\frac{2^n n!}{\sqrt{\pi}}} n |x|^{n} e^{-x^2/2}.
\end{align*}
It is easy to verify that
$\displaystyle \sup_{x \in \bbR}\{ |x|^{n} e^{- x^2 / 4} \} = \left(\frac{2n }{e}\right)^{n/2}.$
Hence when $|x| > 1,$
\begin{align*}
&|\psi_n(x)| \le  2^n  \sqrt{\frac{n! n^n}{e^n \sqrt{\pi}}} e^{-x^2/4}. %& \qed
\end{align*}
\end{proof}
%In the following lemma, we provide numerically computed upper bounds on the $\ell_1$-norm of Hermite functions.
%%%%%%%%%%%%%
%\begin{lemma} \label{lem:hermite-1norm}
%For all integers $0 \le j,k \le 20$,
%\begin{align*}
%\int_{\bbR^2} |\psi_j(x)\psi_k(y) | \ dxdy =  \int_{\bbR} |\psi_j(x)| dx \int_\bbR |\psi_k(y) | dy \le 12.
%\end{align*}
%\end{lemma}
%%%%%%%%%%%%
%\begin{proof}
%Numerically one can verify that $\int_\bbR |\psi_j(x) | dx \le 3.37$ for all natural numbers $j \le 20$. Hence the result follows. \hfill %$\qed$
%\end{proof}
Lemma \ref{lem:integral-convergence1} provides upper bounds on the one-norm of the product of Hermite functions in terms of the order of the Hermite functions, and is crucial in obtaining upper bounds on our error estimates. 
%%%%%%%%%%%%%%%%%%%%%%%%%%%%%%%%%
\begin{lemma} \label{lem:integral-convergence1}
%%%%%%%%%%%%%%%%%%%%%%%%%%%%%%%%%
For real numbers $a,b \neq 0$, let $c = \min(|a|,|b|)$. Let $j,k,n \in \bbN$ and $n \ge 1$. Then
\[
\int_{\bbR^2} \psi_n(ax + by)^{2} | \psi_j(x) \psi_k(y) | \ dx dy
     \le
        % \frac{4.04}{ |ab| n^{3/2}}
		 \frac{(1.74^2)\| \psi_j\|_1 \|\psi_k\|_1}{ n^{5/2}}
        + (4.74)
             2^{j+k} \sqrt{\frac{j!k! j^j k^k}{e^{j+k}}}
         e^{- n / (8c^2) }.
\]
\end{lemma}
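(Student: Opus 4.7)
The strategy is to cover $\bbR^2$ by two (overlapping) regions that isolate the oscillatory window of $\psi_n$ from the monotonic tails of $\psi_j$ and $\psi_k$, and to apply Lemma~\ref{lem:hermite-oscillatory} on the first and Lemma~\ref{lem:hermite-monotonic} on the second. The two terms on the right-hand side of the statement will match these two contributions.

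I would first set $R_1 := \{(x,y) : |ax+by| \le \sqrt{n}\}$ and $R_2 := \{|x| > \sqrt{n}/(2|a|)\} \cup \{|y| > \sqrt{n}/(2|b|)\}$. The triangle inequality $|ax+by| \le |a||x| + |b||y|$ shows that $R_1 \cup R_2 = \bbR^2$, so the integrand being nonnegative gives $\int_{\bbR^2} \le \int_{R_1} + \int_{R_2}$. On $R_1$, Lemma~\ref{lem:hermite-oscillatory} delivers the uniform bound $\psi_n(ax+by)^2 < (1.74)^2 n^{-5/2}$; factoring this constant out and extending the $|\psi_j(x)\psi_k(y)|$ integration from $R_1$ back to all of $\bbR^2$ produces the first term of the stated bound directly.

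On $R_2$, I would split coordinatewise $R_2 = R_{2,x} \cup R_{2,y}$ and treat $R_{2,x} = \{|x| > \sqrt{n}/(2|a|)\}$ in detail (the $R_{2,y}$ contribution being fully symmetric). On $R_{2,x}$ Lemma~\ref{lem:hermite-monotonic} applies to $\psi_j(x)$ and gives $|\psi_j(x)| \le 2^j \sqrt{j!j^j/(e^j\sqrt{\pi})}\,e^{-x^2/4}$ (its hypothesis $|x|>1$ is automatic once $n$ is sufficiently large relative to $a^2$, and the finitely many small-$n$ exceptions can be absorbed into the numerical constants). Bounding $|\psi_k(y)|$ by $\|\psi_k\|_\infty$ and integrating the factor $\psi_n(ax+by)^2$ in $y$ at fixed $x$ via the change of variable $u=ax+by$ (which produces $\int_\bbR \psi_n^2\, dy = 1/|b|$) reduces the contribution on $R_{2,x}$ to the one-dimensional Gaussian tail integral $\int_{|x| > \sqrt{n}/(2|a|)} e^{-x^2/4}\, dx$ times the monotonic prefactor. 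Combined with the symmetric $R_{2,y}$ bound and the hypothesis $|a|,|b|\ge c$, this yields an expression of the same shape as the second term.

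The principal technical obstacle will be matching the exponent $1/(8c^2)$ together with the numerical prefactor $4.74$. The naive pigeonhole split above delivers a Gaussian tail rate of only $1/(16|a|^2)$ on $R_{2,x}$ (and similarly $1/(16|b|^2)$ on $R_{2,y}$), which is tight only when $|a|=|b|=c$ and is otherwise too weak for the stated bound. Tightening to $e^{-n/(8c^2)}$ will likely require either a more refined coordinate decomposition (for instance a single threshold depending only on $c$, with care taken in the overlap with $R_1$), a sharper Mill's-ratio-type tail estimate, or applying Lemma~\ref{lem:hermite-monotonic} to both $\psi_j(x)$ and $\psi_k(y)$ simultaneously so that the combined factor $e^{-(x^2+y^2)/4}$ can deliver the sharper exponent. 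The numerical constant $4.74$ then absorbs the $\sqrt{\pi}$ factor from the monotonic prefactor together with the residual numerical constant of the Gaussian tail estimate.
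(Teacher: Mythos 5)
Your treatment of the oscillatory contribution is correct and is essentially the paper's: the paper uses the diamond $A_1=\{|ax|+|by|\le\sqrt n\}$ where you use the strip $\{|ax+by|\le\sqrt n\}$, but both regions feed Lemma~\ref{lem:hermite-oscillatory} and then extend the integral of $|\psi_j(x)\psi_k(y)|$ to all of $\bbR^2$, which yields the first term $(1.74)^2\|\psi_j\|_1\|\psi_k\|_1 n^{-5/2}$ exactly.

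The gap is in the tail term, and you have diagnosed it yourself: the coordinatewise pigeonhole split $\{|x|>\sqrt n/(2|a|)\}\cup\{|y|>\sqrt n/(2|b|)\}$ followed by a one-dimensional Gaussian tail gives at best $e^{-n/(16\max(a^2,b^2))}$, which is strictly weaker than the stated $e^{-n/(8c^2)}$ with $c=\min(|a|,|b|)$ for every choice of $a,b$ (even when $|a|=|b|$ you are short by a factor of $2$ in the exponent). So as written the proposal does not prove the lemma. The third repair you list is the one the paper actually uses: apply Lemma~\ref{lem:hermite-monotonic} to \emph{both} $\psi_j(x)$ and $\psi_k(y)$ so that the radially symmetric factor $e^{-(x^2+y^2)/4}$ appears, take the second region to be the exterior of the disk $\sqrt{x^2+y^2}>\sqrt n/(\sqrt2\,c)$, and evaluate $\int_{r>\sqrt n/(\sqrt2\,c)} r e^{-r^2/4}\,dr = 2e^{-n/(8c^2)}$ in polar coordinates; this is where both the exponent $1/(8c^2)$ and the product prefactor $2^{j+k}\sqrt{j!k!\,j^jk^k/e^{j+k}}$ come from, with the Charlier--Cram\'er constant and the angular factor $2\pi$ absorbed into $4.74$. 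Be aware, though, that your suspicion about $c=\min$ is well founded: for the disk exterior to contain the complement of the diamond one needs, by Cauchy--Schwarz, $\sqrt{a^2+b^2}\le\sqrt2\,c$, which with $c=\min(|a|,|b|)$ holds only when $|a|=|b|$; otherwise the two regions in the paper's own proof do not cover $\bbR^2$, and the polar-coordinate argument only delivers the exponent with $c$ replaced by $\sqrt{(a^2+b^2)/2}$ (or $\max(|a|,|b|)$). So the step you could not complete in the stated form is not actually completed by the paper either.
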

%%%%%%%%%%%%%%%%%%%%
\begin{proof}
We split the region over which the integral is performed into two overlapping regions $A_1$ and $A_2$ (see Figure \ref{fig:integration-region}), where
\[
A_1 = \{ (x,y) : |ax|+|by|\le \sqrt{n}\}
\] and
%$A_2 = \{(x,y):  |a| |x|, |b| |y| > \sqrt{n} \}$
\[
A_2 = \{(x,y):  x^2+y^2 > \frac{\sqrt{n}}{\sqrt{2}c} \}.
\]
%%%%
\begin{figure}[htb]
  \centering
    \includegraphics[width=0.5\textwidth]{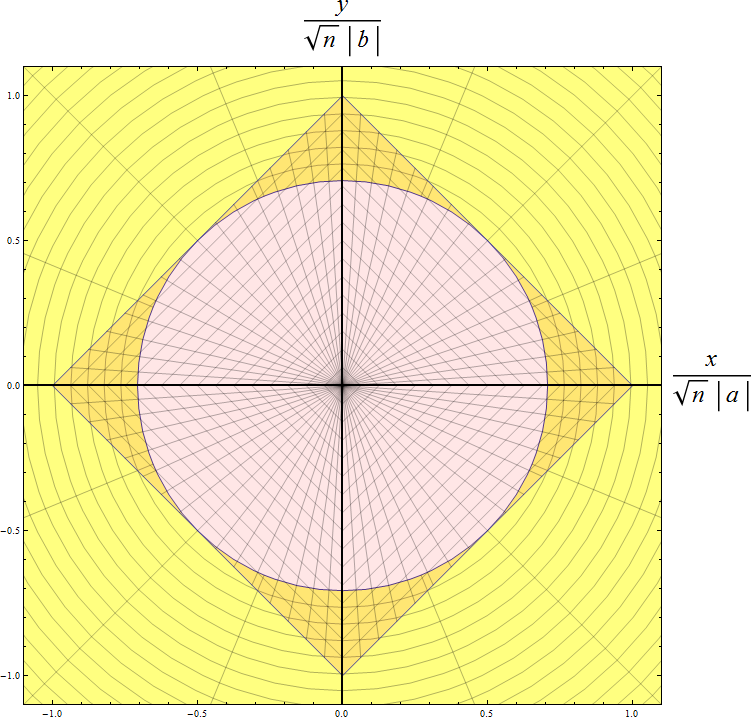}
      \caption{The area inside the square corresponds to the region $A_1$ (where $|\psi_n(ax+by)| \le 1.74n^{-5/4}$), and the area outside the circle corresponds to the region $A_2$ (where all Hermite functions decay exponentially).
      }\label{fig:integration-region}
\end{figure}
%%%%
Then using the Charlier-Cram\'{e}r bound \cite{erdelyi} which states that $\sup_{n,x} |\psi_n(x)| \le \frac{1.086435}{\pi^{1/4}}$, and the uniform upper bound of the envelope of the Hermite polynomial in the oscillatory region as stated in Lemma \ref{lem:hermite-oscillatory}, we get
\begin{align*}
    &\int_{\bbR^2} \psi_n(ax + by)^{2} |\psi_j(x) \psi_k(y) | \ dx dy \\
\le & \int_{A_1} \frac{1.74^2}{n^{5/2}}
             |\psi_j(x) \psi_k(y) | \ dx dy
        + \int_{A_2} \frac{1.086435^2}{\sqrt{\pi}}  | \psi_j(x) \psi_k(y) | \ dx dy.
\end{align*}
The integral of $ |\psi_j(x) \psi_k(y) |$ over the region $A_1$ is at most
%over the parallelogram region $A_1$ cannot be too large because of (i) the area of $A_1$ is $\frac{2n}{|ab|}$, and (ii) the envelope  $|\psi_j(x) \psi_k(y) |$ is trivially bounded by the Charlier-Cram\'{e}r bound.
% also decreases as  $j$ and $k$ increase because of Lemma \ref{lem:hermite-oscillatory}. 
%Using (i), and (ii) we get
\begin{align*}
%\int_{A_1} \frac{1.74^2}{n^{5/2}}
%             |\psi_j(x) \psi_k(y) | \ dx dy \le 
 \frac{(1.74^2)\| \psi_j\|_1 \|\psi_k\|_1}{ n^{5/2}}.
\end{align*}
Using the exponential upper bound of Lemma \ref{lem:hermite-monotonic}, the integral over the annulus region $A_2$ is at most
\begin{align*}
 &\frac{1.086435^2}{\sqrt{\pi}}
             2^{j+k} \sqrt{\frac{j!k! j^j k^k}{e^{j+k} \pi}}
        \int_{A_2}  e^{-(x^2+y^2)/4} \ dx dy
        \\
\le   &
\frac{1.086435^2}{\sqrt{\pi}}
             2^{j+k} \sqrt{\frac{j!k! j^j k^k}{e^{j+k} \pi }}
       2\pi   \int_{r > \frac{\sqrt{n}}{\sqrt{2} c}  } r e^{-r^2/4} \ dr \\
=   &
   1.086435^2
             2^{j+k+1} \sqrt{\frac{j!k! j^j k^k}{e^{j+k}}}
        (2 e^{- n / (8c^2) }). %&\qed
\end{align*}
Combining the upper bounds for region $A_1$ and $A_2$ then gives the result.
\end{proof}
The following lemma is needed to obtain upper bounds on the absolute value of $|J_{b,a,\kappa ,\chi, \kappa', \chi',r}| $, and is similar to the preceding lemma.
%%%%%%%%%%%%%%%%%%%%%%%%%%%%%%%%%
\begin{lemma} \label{lem:integral-convergence2}
Let $0 \neq a,b \in \bbR$, $c = \min(|a|,|b|)$ and $ j,k,n,n' \in \bbN$ where $n,n' \ge 1$. Then
\begin{align*}
&\int_{\bbR^2}
      \psi_{n}(a w + b y)^{2}
      \psi_{n'}(a x  + b z)^{2}
     |  \Theta_{w,x,y,z}(j , k ; r)  | \ dx dy\\ 
     \le &
%       \frac{6.11}{|ab|^2 (n n')^{3/2}}
\frac{57.6 \|\psi_j\|_1 \|\psi_k\|_1}{ (n n')^{5/2}}
      + (39.6)
        2^{j+k} \sqrt{\frac{j!k! j^j k^k}{e^{j+k} }}
        e^{-(n+n')/(8c^2)}.
\end{align*}
\end{lemma}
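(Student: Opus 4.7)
My plan is to mirror the two-region splitting used in Lemma \ref{lem:integral-convergence1}, adapted to the four-dimensional integration $\int_{\bbR^4}$ over $(w,x,y,z)$. First, I would factor the kernel as $|\Theta_{w,x,y,z}(j,k;r)| = |\psi_j(w)\psi_k(x)|\,G(y,z)$, where
\[
G(y,z) := \exp\!\bigl[-\tfrac{1+r^2}{2(1-r^2)}\bigl(y^2 - \tfrac{4r}{1+r^2}yz + z^2\bigr)\bigr],
\]
and partition $\bbR^4$ into the oscillatory region
\[
B_1 := \bigl\{(w,x,y,z) : |aw|+|by| \le \sqrt{n}\text{ and }|ax|+|bz|\le\sqrt{n'}\bigr\}
\]
together with its complement $B_2$, estimating the integral on each region separately.

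On $B_1$, Lemma \ref{lem:hermite-oscillatory} gives the uniform bound $\psi_n(aw+by)^2\psi_{n'}(ax+bz)^2 \le (1.74)^4/(nn')^{5/2}$; extending the remaining integrand to all of $\bbR^4$ and exploiting the separability of $|\psi_j(w)\psi_k(x)|$ from $G(y,z)$ gives the upper bound $\bigl[(1.74)^4/(nn')^{5/2}\bigr]\,\|\psi_j\|_1\|\psi_k\|_1\int_{\bbR^2}G(y,z)\,dy\,dz$. The crucial calculation is that, writing $G = \exp(-\tfrac{1}{2}(y,z) M (y,z)^T)$, the symmetric matrix $M$ has determinant exactly one because $(1+r^2)^2 - 4r^2 = (1-r^2)^2$, so that $\int_{\bbR^2}G = 2\pi$ and the prefactor collapses to $(1.74)^4 \cdot 2\pi \le 57.6$, matching the first term of the claim.

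On $B_2 \subseteq \{|aw|+|by|>\sqrt{n}\} \cup \{|ax|+|bz|>\sqrt{n'}\}$, I would apply a union bound and, in each sub-region, replace $\psi_n(aw+by)^2\psi_{n'}(ax+bz)^2$ by the squared Charlier-Cram\'er constant $(1.086435)^4/\pi$ while applying the exponential envelope of Lemma \ref{lem:hermite-monotonic} to $\psi_j(w)\psi_k(x)$. Passing to polar coordinates in the pair carrying the unbounded Hermite argument (e.g.\ $(w,y) = (\rho\cos\phi,\rho\sin\phi)$ with $\rho > \sqrt{n}/(\sqrt{2}\,c)$), the radial Gaussian integral $\int_{\sqrt{n}/(\sqrt{2}c)}^\infty \rho\,e^{-\rho^2/4}\,d\rho = 2\,e^{-n/(8c^2)}$ produces the claimed decay in $n$, and the symmetric argument for $\{|ax|+|bz|>\sqrt{n'}\}$ yields the matching $e^{-n'/(8c^2)}$ factor.

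The main technical obstacle is that $G(y,z)$ non-trivially couples the bath variables $y$ and $z$, whereas the oscillatory splittings act on the pairs $(w,y)$ and $(x,z)$, so the four-dimensional integral does not factorize cleanly along the same axes used for the decomposition. I would handle this by diagonalizing the positive-definite quadratic form in the exponent to obtain the decoupled upper bound $G(y,z) \le \exp\!\bigl[-\tfrac{1-r}{2(1+r)}(y^2+z^2)\bigr]$, which restores the product structure required to split the radial integration into independent $(w,y)$- and $(x,z)$-pieces and recover the combined $e^{-(n+n')/(8c^2)}$ decay, at the cost of an $r$-dependent constant that can be absorbed into the numerical prefactor $39.6$.
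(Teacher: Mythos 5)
Your handling of the oscillatory region is correct, and the observation that the quadratic form in $G(y,z)$ has determinant one (since $(1+r^2)^2-4r^2=(1-r^2)^2$), so that $\int_{\bbR^2}G\,dy\,dz=2\pi$ exactly and independently of $r$, is actually cleaner than the paper's treatment and recovers the constant $1.74^4\cdot 2\pi\le 57.6$. The decisive gap is in the outer region. You take $B_2$ to be the complement of $B_1$ and union-bound it by $S_1=\{|aw|+|by|>\sqrt{n}\}$ and $S_2=\{|ax|+|bz|>\sqrt{n'}\}$. But on $S_1$ there is no constraint on $(x,z)$, so integrating over that pair returns only an $O(1)$ Gaussian constant; the $S_1$ contribution is therefore of order $e^{-n/(8c^2)}$ with no $n'$-dependence whatsoever, and symmetrically $S_2$ contributes of order $e^{-n'/(8c^2)}$. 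A union bound makes these \emph{add}, not multiply, so your argument can only produce $e^{-n/(8c^2)}+e^{-n'/(8c^2)}$, which is exponentially larger than the claimed product $e^{-(n+n')/(8c^2)}$ once both $n$ and $n'$ are large. The paper obtains the product precisely because its outer region is the \emph{intersection} $\{w^2+y^2>\sqrt{n}/(\sqrt{2}c)\}\cap\{x^2+z^2>\sqrt{n'}/(\sqrt{2}c)\}$, so both radial cut-offs are active simultaneously and the four-dimensional Gaussian tail factorizes into two one-dimensional radial tails. (Whether $A_1$ together with that intersection genuinely covers $\bbR^4$ is a point the paper itself glosses over, but regardless, your decomposition cannot yield the stated bound.)

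A secondary issue: your decoupling $G(y,z)\le\exp\bigl[-\tfrac{1-r}{2(1+r)}(y^2+z^2)\bigr]$ is the correct sharp diagonal bound (it is what one gets by completing the square, i.e.\ $\tfrac{1+r^2}{2(1-r^2)}(y^2+z^2)-\tfrac{2r}{1-r^2}yz=\tfrac{1-r}{2(1+r)}(y^2+z^2)+\tfrac{r}{1-r^2}(y-z)^2$), but the resulting tail integrals scale like $(1+r)/(1-r)$ and diverge as $r\to 1$, so the $r$-dependence cannot simply be "absorbed into $39.6$". The paper performs the same factorization but asserts the diagonal coefficient is $\tfrac12$, which is only correct at $r=0$; so this defect is shared with the paper rather than introduced by you, but hand-waving it into the numerical prefactor is not a valid repair.
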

%%%%%%%%%%%%%%%%%%%%%%%%%%%%%%%%%
\begin{proof}
%The proof of this lemma is similar to the proof of Lemma \ref{lem:integral-convergence1}.
We split the region over which the integral is performed into two overlapping regions $A_1$ and $A_2$ (just as in the proof of Lemma \ref{lem:integral-convergence1}), where
\[
A_1 = \{ (w,x,y,z) : |a w|+|b y| \le \sqrt{n} , |a x|+|b z| \le \sqrt{n'}\}
\]
 and
\[
A_2 = \{(w,x,y,z):  w^2+y^2 > \frac{\sqrt{n}}{\sqrt{2} c} ,
                             x^2 + z^2 > \frac{\sqrt{n'}}{ \sqrt{2} c} \}.
\]
Then using the Charlier-Cram\'{e}r bound, Lemma \ref{lem:hermite-oscillatory}, and Lemma \ref{lem:hermite-monotonic}, we get
\begin{align*}
    &\int_{\bbR^2}
      \psi_{n}(a w + b y)^{2}
      \psi_{n'}(a x  + b z)^{2}
     |  \Theta_{w,x,y,z}(j , k ; r)  | \ dw\ dx\ dy\ dz \\
\le & \int_{A_1}
\frac{1.74^4}{(n n')^{5/2}}
    |  \Theta_{w,x,y,z}(j , k ; r)  |  \ dw\ dx\ dy\ dz
        + \int_{A_2}  \frac{1.086435^4}{\pi}
        |  \Theta_{w,x,y,z}(j , k ; r)   | \ dw\ dx\ dy\ dz.
\end{align*}
Now observe that we can express the kernel $ \Theta_{w,x,y,z}(j , k ; r) $ as
\begin{align*}
 \Theta_{w,x,y,z}(j , k ; r) =
e^{- \half (y^2 + z^2)}
\exp\left[
           -\frac{r}{1-r^2} (y-z)^2
\right]
\psi_j(w)
\psi_k(x)
\end{align*}
and hence the absolute value of our kernel has an upper bound that factorizes, in the sense that 
\begin{align*}
| \Theta_{w,x,y,z}(j , k ; r) | \le
e^{-\half (y^2 + z^2)}
|\psi_j(w) \psi_k(x)|.
\end{align*}
Hence the integral over the region $A_1$ is at most 
\[
\frac{1.74^4 (2\pi)\|\psi_j\|_1 \|\psi_k\|_1}{ (n n')^{5/2}}.
\]
The upper bound on the kernel also allows us to find that the integral over the region $A_2$ is at most
\begin{align*}
&   \frac{1.086435^4}{\pi}
        2^{j+k} \sqrt{\frac{j!k! j^j k^k}{e^{j+k} \pi}}
        \int_{A_2}  e^{-(w^2+x^2)/4}  e^{-\half (y^2 + z^2)}
        \ dw\ dx \ dy\ dz \\
\le & \   \frac{1.086435^4}{\pi^{3/2}}
        2^{j+k} \sqrt{\frac{j!k! j^j k^k}{e^{j+k} }}
        (2\pi)^2 \int_{r_1 \ge \frac{\sqrt{n}}{\sqrt{2} c}}
              e^{-r_1^2/4} dr_1
        \int_{r_2 \ge \frac{\sqrt{n'}}{ \sqrt{2} c}}
              e^{-r_2^2 /4}
        dr_2\\
\le &    \frac{1.086435^4(2\pi)^2}{\pi^{3/2}}
        2^{j+k} \sqrt{\frac{j!k! j^j k^k}{e^{j+k} }}
        4e^{-(n+n')/(8c^2)}\\
 \le &  (39.6)
        2^{j+k} \sqrt{\frac{j!k! j^j k^k}{e^{j+k} }}
        e^{-(n+n')/(8c^2)}.
        %&\qed
\end{align*}
Combining our upper bounds on the integrals in the regions $A_1$ and $A_2$ thereby gives the result.
\end{proof}
%%%%%%%%%%%%%%%%%%
%%%%%%%%%%%%%%%%%%
%%%%%%%%%%%%%%%%%%
\section{A case study: Off-resonant weakly coupled oscillators} \label{sec:example}
In this section, we perform a case study of the dynamics of our physical model when the oscillators are off-resonant and weakly coupled. In agreement with the standard results of perturbation theory, we find negligible amplitude damping in the truncated dynamics of our system. Moreover, we show that leakage error is the dominant error process, and can be mitigated via quantum error correction.

\subsection{Parameters of the physical model and the truncated channel}
The amount of truncation in our truncated channel $\cN$ is quantified by the parameter $L=2$, and the order of our approximation to the truncated channel is quantified by the parameter $N=6$ (see (\ref{eq:TA-approximation}) for the definition of the truncated transition amplitudes). We restrict the analysis of our truncated channels to a 4-level system by setting the parameter $D=3$.

Table \ref{table:qho-parameters} shows the parameters used for our physical model. We use the SI units. The output parameters are numerically computed using floating point numbers with 1024 bits of precision, and are shown up to ten decimal places.
\begin{table}[ht]
\centering
\begin{tabular}{|c| c| c| c|c|c|}
\hline
 &  Input parameters  &  &   Output parameters   & & Other parameters and bounds  \\
\hline
$m_{\rx}$ & $10^{-6} $ (kg)& 
$\w_{\rx}$& 1000499.8750624610 (Hz)&
$\|\psi_0\|_1$  & $\le 1.8827925275534299$\\
$m_{\ry}$ & $2\times 10^{-6}$ (kg)   &  
$\w_{\ry}$& 10000024.9999687501 (Hz)&
$\|\psi_1\|_1$ & $\le 2.124503864054394$ \\
$\omega_{\rx, {\bf o }} $ & $10^{6}$ (Hz) &
$\w_{\ru}$& 1001004.5438332595(Hz)&
$\|\psi_2\|_1$ & $\le 2.285324224284738$ \\
$\omega_{\ry , {\bf o}}$ & $10^{7}$ (Hz) & 
$\w_{\rv}$& 10000025.0001779494 (Hz)&
$\|\psi_3\|_1$ & $\le 2.410237758997186$\\
k & 100  & 
 $u_1$ & 10000025.0001779494
&  &  \\
 r & 0  &   
 $u_2$ & 0.0000031958 
 & &\\
  &   &  
 $v_1$ & -0.0000451621 
 & &\\
  &    & 
 $v_2$ & 1.0000000000 
 & &\\
\hline
\end{tabular}
\caption{We tabulate the important parameters of our physical model. Here, $u_1, u_2,v_1 $ and $v_2$ are defined implicitly in the equations $u_{x,y} = u_1 x + u_2 y$ and $v_{x,y} = v_1 x + v_2 y$ (see (\ref{eq:uxy_vxy})). Also note that $\|\psi_n\|_1 := \int_\bbR |\psi_n(x)| dx$. \label{table:qho-parameters}
}
\end{table}
For our choice of parameters, the positive constants $
A \tilde A B, 
\tilde A^2 B, 
A^2 \tilde B, 
A \tilde A \tilde B,
 \tilde A^2 \tilde  B 
 \le 10^{-288},$
and are negligible, and hence Remark \ref{remark:qho} holds. 
%%%%%%%%%%%%%%%%%%%
\subsection{Approximate dynamics of the truncated channel}
We numerically evaluate approximate truncated transition amplitudes (\ref{eq:TA-approximation}) corresponding to the transitions within the lowest energy levels of the system. The evaluation of each such approximate truncated transition amplitude $A_{2,6,t}^{(a,b)\to (a',b')}$ is a sum of 7203 terms for our choice of $L=2$ and $N=6$. We obtain the corresponding error bounds of our approximation from Remark \ref{remark:tighter-bound} with $N' = 15$. The error of each of our approximate truncated transition amplitude is at most 0.00084 -- a negligible amount. We plot magnitudes of the non-negligible truncated transition amplitudes in Figure \ref{fig:truncated-dynamics}.

Numerically, we find that the non-negligible terms of
$A_{2,6,t}^{(a,b)\to (a',b')}$ have values of $0 \le a,b \le 1$ and $0 \le a',b'\le 3$ that satisfy the relation
\begin{align}
\frac{a' - a}{2}  , \frac{b' - b}{2} \in \bbZ.
\end{align}
The above relation holds for two reasons. Firstly, the conservation of the parity of all truncated transition amplitudes as stated in Remark \ref{remark:parity} implies that the parity of $a+b$ equals the parity of $a'+b'$. Secondly, the other negligible transitions are in agreement with the results of using perturbation theory on off-resonant and weakly coupled harmonic oscillators. Also note that the negligible damping from the first excited state to the ground state of our truncated channel suggests that the coupled oscillator model is inconsistent with the phenomenon of amplitude damping \cite{nielsen-chuang} even at zero temperature.

The Choi-Jamiolkowski (CJ) operator of the finite input and output dimension channel $\Phi$ with Kraus set $\fK$ is the linear operator 
\begin{align}
\chi_{\Phi} := \sum_{\tK \in \fK} | \tK \rrangle \llangle \tK| =  \sum_{j,j'} \sum_{k,k'}
 \left( \sum_{\tK \in \fK } \tK_{j,j'}\tK_{k,k'}^*  \right)
 |j_\cK,  j'_\cH\>\<  k_\cK ,k'_\cH|.
\end{align}
where the linear map 
$| \cdot \rrangle : L(\cH, \cK) \to \cK \otimes \cH$ is a stacking isomorphism such that 
\begin{align*}
| \sum_{i,j} a_{i,j} |i_\cK\>\<j_\cH| \rrangle := \sum_{i,j} a_{i,j} |i_\cK\> |j_\cH\>.
\end{align*}

We now construct a quantum operation $\cA$ to approximate the truncated channel $\cN$.
  Abbreviating our approximate truncated transition amplitudes as $A_{aba'b'}:= A_{2,6,t}^{(a,b)\to(a',b')}$, our approximation to the CJ operator of our truncated channel is
\begin{align}
\chi :=
\begin{pmatrix}
 A_{0000} & A_{0002} & A_{0101} & A_{0103} \\
 A_{0020} & A_{0022} & A_{1012} & A_{0123} \\
 A_{1010} & A_{1012} & A_{1111} & A_{1113} \\
 A_{1030} & A_{1032} & A_{1131} & A_{1133}
\end{pmatrix}
\end{align}
where the bases labeled by the rows and columns are $|0,0\>, |2,0\>, |1,1\>, |3,1\>$ and $\<0,0|, \<2,0|, \<1,1|, \<3,1|$ respectively. The first and second entries of our bras and kets correspond to the output and input Hilbert spaces of the truncated map respectively. 
We plot the absolute value of some of these non-negligible matrix elements in Figure \ref{fig:truncated-dynamics}. 
\begin{figure}[htb]
  \centering
    \includegraphics[width=1.0\textwidth]{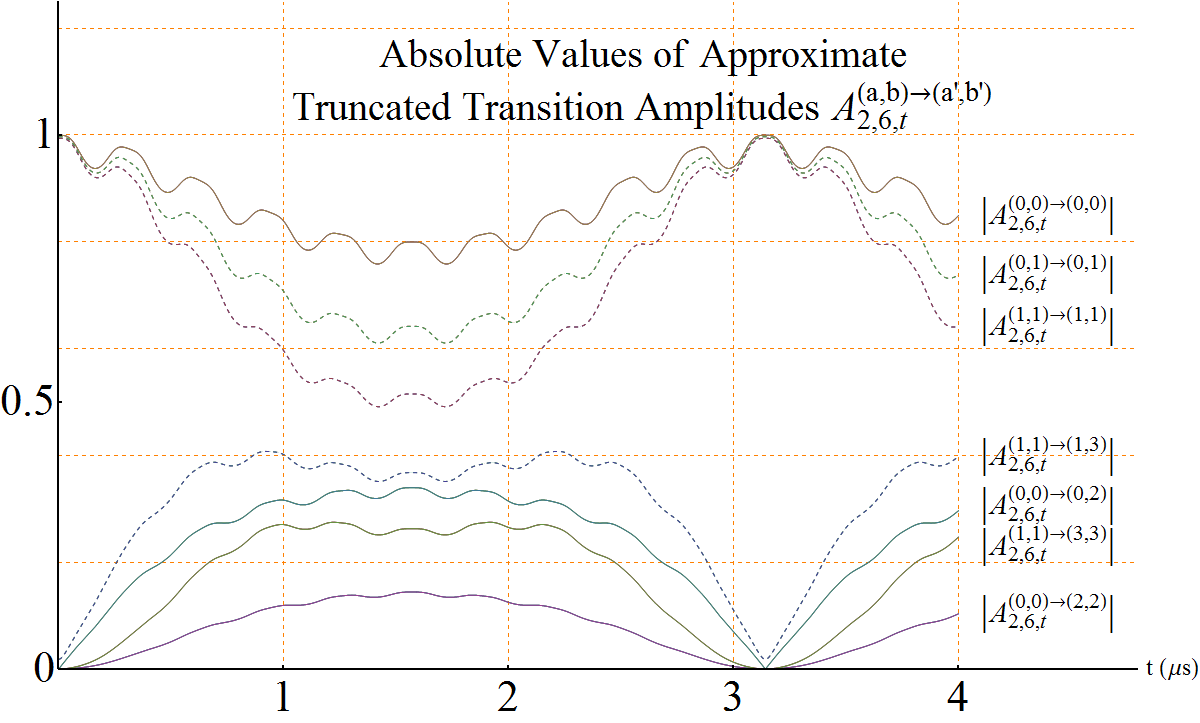}
      \caption{Absolute values of approximations to our truncated amplitudes are depicted for our model with parameters given by Table
  (\ref{table:qho-parameters}).  }\label{fig:truncated-dynamics}
\end{figure}

Let the spectral decomposition of $\chi$ have the form $\chi = \sum_{i=1}^4 \lambda_i |\lambda_i\> \<\lambda_i| $ where $|\lambda_i\>\<\lambda_i|$ are orthogonal projectors and the eigenvalues $\lambda_i$ are in non-increasing order \footnote{The matrix $\chi$ is symmetric and hence has real eigenvalues.}. For $\lambda_i \ge 0$, let $A_i$ be the image of the inverse map of the linear operator $|\cdot \>\>$ acting on 
$\sqrt{\lambda_{i}} |\lambda_{i}\>$. 
  Numerically diagonalizing the matrix $\chi$ shows that it has only one dominant eigenvalue. 
Hence we use the quantum operation $\cA(v) :=  A_1 v A_1^\dagger$ to approximate the truncated channel $\cN$ where 
 $A_1 = \sqrt{\lambda_1} (
   k_0 |0\>\<0|
 + k_2 |2\>\<0|
 + k_1 |1\>\<1|
 + k_3 |3\>\<1|)
 $ and $|\lambda_1\>  
= k_0 |0,0\> + k_2|2,0\> +
 k_1 |1,1\> + k_3|3,1\> 
 $. Observe that for $i,j \in \{0,1,2,3\}$ we have 
 \begin{align}
 \cA(|i\>\<j|) = \lambda_1 
 \bigl(k_i |i\> + k_{i+2} |i+2\> \bigr)
 \bigl(k_j^* \<j| + k_{j+2}^* \< j+2| \bigr).
 \end{align}

\subsubsection{Leakage error}
Leakage error, a dominant process of our model, occurs when low energy states transition into higher energy states within the quantum system. We derive lower bounds on the minimum amount of qubit leakage in our system, when the quantum channel is  $\Phi_t$ with Kraus set $\fK_t$ given by (\ref{eq:kraust}).
For qubit leakage to occur, it suffices to have the strict inequality  
\begin{align}
{\rm Leakage}(\Phi_t, \rho) := \sum_{i=2}^\infty \<i_\cX | \Phi_t(\rho) | i_\cX\> > 0, \label{ineq:leakage}
\end{align}
for some density operator $\rho$ supported on the span of $|0_\cX\>$ and $|1_\cX\>$. The above expression quantifies the amount of leakage from the qubit state space of our system.

We proceed to obtain a strictly positive lower bound on ${\rm Leakage}(\Phi_t, \rho)$. If $\Phi_t = \cN + \cM$ for some completely positive maps $\cN$ and $\cM$, then the complete positivity of $\cN$ and $\cM$ implies that 
$
{\rm Leakage}(\Phi_t, \rho) \ge {\rm Leakage}(\cN, \rho).
$
Hence it suffices to obtain a lower bound for ${\rm Leakage}(\cN, \rho)$. For our application, $\cN$ is our truncated channel. 

Assume that the initial state of the system is $\rho =  \half \bigl( | 0_\cX \> \<0_\cX| +  | 1_\cX \> \< 1_\cX| \bigr)$, the maximally mixed state in the qubit space. Then
the leakage of the truncated channel $\cN$ is at least
$
\half  \Bigl( T_{\Omega_{L,t}}^{(0,0) \to (2,2)}  +
T_{\Omega_{L,t}}^{(1,1) \to (3,3)} \Bigr)  .
$
Thus we have that
$
{\rm Leakage}(\cN, \rho)  \ge
\half  \Bigl( \bigl|  A_{{L,N,t}}^{(0,0) \to (2,2)} \bigr| +
\bigl| A_{{L,N,t}}^{(1,1) \to (3,3)}  \bigr| \Bigr)  - 2\eps
$
where $\eps$ is given the upper bound in Theorem 
\ref{thm:qho-error-bound}. In view of the data given in Figure \ref{fig:truncated-dynamics}, the amount of qubit leakage can actually be quite substantial. In particular, when $t= 5 \times 10^{-6}$, the amount of leakage of $\Phi_t$ is at least
$ 0.4$ which is substantially larger than zero.

The large amount of leakage from our qubit state highlights the importance of accounting for transitions of low energy states into excited states in oscillator systems, and also the problem of using the lowest two energy eigenstates as a basis to encode our qubit.

\subsubsection{Quantum error correction versus no quantum error correction}
We consider a universe with four harmonic oscillators $\cX_1,\cX_2, \cY_1$ and $\cY_2$. We identify the oscillators $\cX_1$ and $\cX_2$ as $x$-type oscillators and the oscillators $\cY_1$ and $\cY_2$ as $y$-type oscillators, with their parameters given by Table \ref{table:qho-parameters}. Assume that there are only $\cX_1$-$\cY_1$ couplings and $\cX_2$-$\cY_2$ couplings in our universe. Suppose that a maximally entangled two-qubit state is initialized in the $\cX_1$ and $\cX_2$ oscillators supported on their two lowest energy levels. We assume that the oscillators $\cY_1$ and $\cY_2$ are initialized in the ground state. We obtain a lower bound on the fidelity of the time-evolved states when instantaneous, identical and independent recovery operations are performed on oscillators $\cX_1$ and $\cX_2$. Denoting our truncated channel and recovery channel on a single system oscillator by $\cN$ and $\cR$ respectively, our lower bound on the output fidelity is 
\begin{align}
f_\cR = \<\Phi| (\cR \otimes \cR ) \circ (\cN \otimes \cN) (|\Phi\>\<\Phi|) |\Phi\>,
\end{align}
where 
$|\Phi\> =  (
|0_{\cX_1}, 0_{\cX_2}\> +
|1_{\cX_1}, 1_{\cX_2}\> ) /\sqrt{2}$. 
We now proceed to obtain a lower bound on the entanglement fidelity of our time-evolved maximally entangled state, with and without recovery. The
maximally entangled state on two qubits written as a density matrix is
\begin{align}
|\Phi\>\<\Phi| =\frac{1}{2} \sum_{i,j \in \{0,1\} } |i,i\>\<j,j| .
\end{align}
By linearity, the action of our truncated channels on the maximally entangled state gives
\begin{align}
(\cN \otimes \cN)|\Phi\>\<\Phi| 
= \frac{1}{2} \sum_{i,j \in \{0,1\} } \cN( |i\>\<j|) \otimes \cN( |i\>\<j|).
\end{align}
Let us denote the error of the truncated transition amplitude of $A_{iji'j'}$ to be $\eps_{iji'j'}$ so that 
\begin{align}
(\cN \otimes \cN)|\Phi\>\<\Phi| 
&= \frac{1}{2} \sum_{i,j \in \{0,1\} } \cN( |i\>\<j|) \otimes \cN( |i\>\<j|) \notag \\
&= \frac{1}{2} \sum_{\substack{
	i,j \in \{0,1\}\\
	i_1,j_1 \in \{0,1,2,3\}\\ 
	i_2,j_2 \in \{0,1,2,3\}\\ 
 }}
 	(A_{ij i_1 j_1} + \eps_{ij i_1 j_1} )
 	(A_{ij i_2 j_2} + \eps_{ij i_2 j_2} ) |i_1, i_2\>\<j_1, j_2| \label{eq:N_otimes_N_state}
\end{align}
If we perform no recovery operation, $\cR$ is just the identity map $\cI$, and we have
\begin{align}
f_\cI 
	&= \<\Phi| (\cN \otimes \cN) (|\Phi\>\<\Phi|) |\Phi\> \notag\\
	&= \<\Phi| 
	 \left( \sum_{(i,j)\in \{0,1\}}	\cN(|i_{\cX_1}\>\<j_{\cX_1}|) \otimes \cN(|i_{\cX_2}\>\<j_{\cX_2}|) \right)
	|\Phi\>/2 .
\end{align}
Dropping the labels on the Hilbert spaces of our bras and kets, we can use (\ref{eq:N_otimes_N_state}) to find that 
\begin{align}
f_\cI &=
   \frac{1}{2}( \<0, 0| + \<1, 1|)
   		(\cN \otimes \cN)|\Phi\>\<\Phi| 
    ( |0, 0\> + |1, 1\>) \notag\\
&= \frac{1}{4} \sum_{\substack{
	i,j \in \{0,1\}\\
	i_1,j_1 \in \{0,1,2,3\}\\ 
	i_2,j_2 \in \{0,1,2,3\}\\ 
 }}
 	(A_{ij i_1 j_1} + \eps_{ij i_1 j_1} )
 	(A_{ij i_2 j_2} + \eps_{ij i_2 j_2} ) |i_1, i_2\>\<j_1, j_2| \notag\\
 	&\ge
 	 \frac{1}{4} (A_{0000}^2 + A_{1111}^2 + A_{0101}^2 +A_{1010}^2) - 8(2\eps+\eps^2)
\end{align}
where $\eps \le 0.00084$. Note that $ A_{0101}^2 +A_{1010}^2$ can be negative.

The Barnum-Knill recovery operator $\cR^{\rm BK} $ \cite{BaK02} and the Tyson-Beny-Oreshkov quadratic recovery operator \cite{Tys10, BeO10} are near optimal recovery operators defined with respect to a quantum operation $\cA$ and a state $\rho$, and are equivalent when $\cA$ has only one Kraus operator. For our application, we study the Barnum-Knill recovery, with $\rho = (|0\>\<0| + |1\>\<1|)/2 $ and quantum operation $\cA$ approximating the truncated channel $\cN$. Our choice of $\rho$ shows that our quantum information is encoded in the trivial quantum code (no encoding).
For our application, the Barnum-Knill recovery operator which is also a quantum operation is defined as
\begin{align}
\cR^{\rm BK}(v) = 
A_1^\dagger 
	(A_1A_1^\dagger)^{-1/2^+}
	v
	(A_1A_1^\dagger)^{-1/2^+}
A_1
\end{align}
where $(A_1A_1^\dagger)^{^-1/2^+}$ is the square root of the psuedo-inverse of the operator $(A_1A_1^\dagger)$. When we use the Barnum-Knill recovery operation $\cR^{\rm BK}$, the fidelity of recovery is
\begin{align}
f_{\rm BK}
	&= \<\Phi|(\cR^{\rm BK}\otimes \cR^{\rm BK}) \circ (\cN \otimes \cN) (|\Phi\>\<\Phi|) |\Phi\> .
	\ge \lambda_1^2 - \frac{|\lambda_2| + |\lambda_3| + |\lambda_4|}{4} - \frac{1}{4} (2\eps + \eps^2).
\end{align}
We plot lower bounds of the fidelity with and without Barnum-Knill recovery in Figure \ref{fig:qho-fidelity}, and demonstrate that a fidelity of more than 60\% is still possible in spite of the leakage error and the use of truncated quantum channels in our analysis.
\begin{figure}[htb]
  \centering
    \includegraphics[width=1\textwidth]{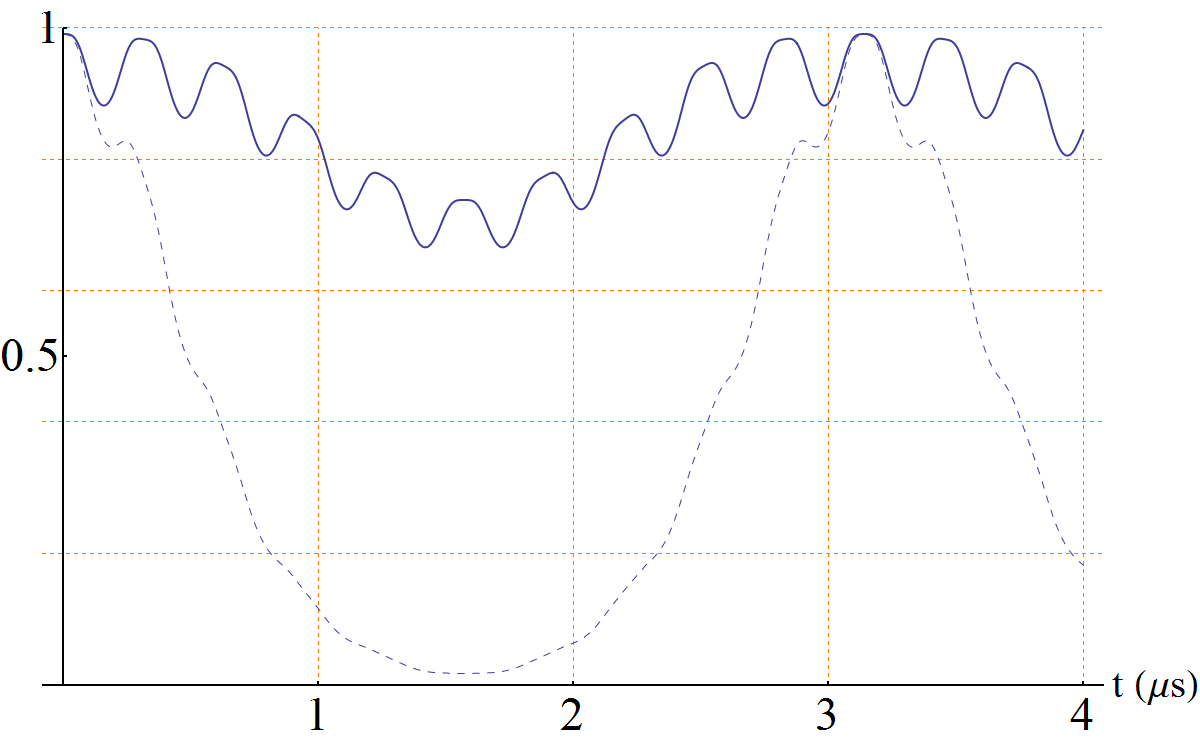}
      \caption{Lower bounds on the fidelity of an entangled state without recovery $f_\cI$ (dashed line) and with Barnum-Knill recovery $f_{\rm BK}$ (solid line) are plotted with respect to time.
      }\label{fig:qho-fidelity}
\end{figure}

%%%%%%%%%%%%%
\section{Discussions}
The system we consider is described by a quantum harmonic oscillator coupled through a spring-like interaction to another initially decoupled harmonic oscillator. We provide approximations to the truncated transition amplitudes of such a system. The converging error bound of such approximations is our main result. Properties of the integrals of products of Hermite functions lie at the heart of the proof. It is also worth noting that any ensemble of harmonic oscillators with spring-like coupling can be analyzed similarly.  

We also show numerically that in agreement with intuition from perturbation theory, when the oscillators are off-resonant and weakly coupled, amplitude damping is a negligible physical process. We also use our truncated channel representation to show that qubit leakage can be a dominant physical process, and how Barnum-Knill recovery can help protect a maximally entangled state stored in two oscillators each coupled independently to distinct zero-temperature harmonic baths, in the paradigm of off-resonant and weak coupling between the system and a zero temperature bath.

\section{Acknowledgements}
We like to thank Si-Hui Tan and the referees for helpful suggestions.

\bibliography{mybib}{}
\addcontentsline{toc}{paper}{Bibliography}
\bibliographystyle{ieeetr}
\end{document}